\documentclass[conference]{IEEEtran}
\usepackage{amsmath,amssymb,amsfonts}
\usepackage{comment}
\usepackage{booktabs}
\usepackage{caption}
\usepackage{algorithm}
\usepackage{algorithmic}
\usepackage{float}
\usepackage{bm}
\usepackage{graphicx}
\usepackage{dblfloatfix}
\usepackage{cite}
\usepackage{placeins}
\usepackage{balance}
\usepackage{tikz}
\usepackage{amsthm}
\usepackage{xcolor}

\usetikzlibrary{shapes.geometric, arrows.meta, positioning, calc, decorations.pathmorphing}
\usepackage[colorlinks=true,linkcolor=blue,citecolor=blue]{hyperref}

\newcommand{\E}{\mathbb{E}}
\newcommand{\Prob}{\mathrm{Pr}}
\newcommand{\Rq}{R_q}
\newcommand{\calH}{\mathcal{H}}

\newtheorem{proposition}{Proposition}
\newtheorem{lemma}{Lemma}
\newtheorem{theorem}{Theorem}
\begin{document}

\title{Privacy-Enhanced Zero-Order Federated Learning via xMK-CKKS over Wireless Channels}

\author{
\IEEEauthorblockN{Anthony Ayli}
\IEEEauthorblockA{anthony.ayli@net.usj.edu.lb}
\and
\IEEEauthorblockN{Khalil Harris}
\IEEEauthorblockA{khalil.harris@usj.edu.lb}
\and
\IEEEauthorblockN{Jihad Fahs}
\IEEEauthorblockA{jihad.fahs@aub.edu.lb}
\and
\IEEEauthorblockN{Mohamad Assaad}
\IEEEauthorblockA{mohamad.assaad@centralesupelec.fr}
}

\maketitle

\begin{abstract}
Homomorphic encryption (HE) enables privacy-preserving aggregation in federated learning (FL) by allowing the server to operate on encrypted data without decryption. Existing HE-over-the-air (OTA) methods mainly rely on single-key HE schemes and require channel estimation or pre-equalization to compensate for wireless fading. However, single-key HE remains vulnerable to honest-but-curious (HBC) clients holding the shared secret key, while multi-key HE provides stronger client-level security by assigning each device its own secret key. We propose a four-phase protocol that enables the aggregation of xMK-CKKS over a shared wireless channel without channel estimation. The protocol retransmits partial public keys and ciphertexts through the same channel realization, so that the dominant large-modulus encryption terms cancel algebraically during decryption. We integrate this protocol with zero-order FL over slowly varying LoS-dominant channels, where each device transmits a single encrypted scalar per round and the communication/encryption overhead is independent of the model dimension. We show that the residual noise induced by encryption and wireless aggregation preserves the standard convergence rate \(O(1/\sqrt{K})\) up to a negligible noise floor, where $K$ is the number of communication rounds. The protocol assumes a non-trusted server and is secure against HBC clients, preventing any client from recovering the local updates of other participants. Numerical results on MNIST and CIFAR-10 validate the theoretical analysis.
\end{abstract}

\section{Introduction}

Federated learning (FL) \cite{fedavg} enables edge devices to collaboratively train a shared model without sharing their local data. However, the model updates, gradients, or gradient surrogates transmitted by the devices may still reveal sensitive information and remain vulnerable to inference attacks \cite{gradattack}. Over-the-air (OTA) computation \cite{amiri} and zero-order (ZO) gradient estimation \cite{ezofl,mhanna} reduce the uplink communication cost by exploiting the superposition property of the wireless channel. In particular, ZO FL methods \cite{ezofl,mhanna} can reduce the communication load per-device to one or two scalars per round. Nevertheless, these analog transmissions are non-encrypted and therefore do not provide cryptographic protection.

Homomorphic encryption (HE) offers a natural mechanism for privacy-preserving aggregation, since it allows the server to compute directly on encrypted data. Existing HE-over-the-air methods \cite{heairfed,airhe} combine single-key HE with over-the-air aggregation but rely on channel estimation, pre-equalization, or beamforming to compensate for wireless fading. Moreover, because single-key HE places clients in a shared decryption domain, an honest-but-curious (HBC) client holding the shared secret key may decrypt or infer other clients updates.

Multi-key HE addresses this limitation by assigning each device its own secret key. In particular, xMK-CKKS \cite{mkckks} requires all participating devices to contribute partial decryption shares and remains secure against HBC devices. However, secure ring learning with errors (RLWE) based implementations require large ciphertext moduli, e.g., parameter sets such as $(n=4096,q\approx2^{109})$ and $(n=8192,q\approx2^{218})$ in CKKS deployments~\cite{sealcrypto}. At this scale, channel-estimation errors and wireless distortions can leave $q$-scale decryption residuals.

This motivates the central question addressed in this paper: can multi-key HE be combined with over-the-air aggregation without estimating or compensating the wireless channel? We answer this question by proposing a four-phase protocol in which the partial public keys and ciphertexts are transmitted through the same channel realization. As a result, the dominant large-modulus encryption terms experience the same channel coefficients and cancel algebraically during decryption. The remaining channel and encryption noise are not amplified by $q$; instead, they enter the learning algorithm as bounded perturbations.

The proposed protocol is particularly suited to slowly varying LoS-dominant links, such as short-range indoor THz communication scenarios \cite{chen_thz_2024,Jornet2024Evolution}. Such links are often highly directional and dominated by a small number of propagation paths. In fixed or low-mobility deployments, the effective channel can remain approximately constant over the short duration of the protocol phases.

The main contributions of this paper are as follows:
\begin{itemize}
\item We propose a four-phase over-the-air protocol that enables xMK-CKKS aggregation over a shared wireless channel without channel state information (CSI) acquisition or pre-equalization.
\item We show that retransmitting the partial public keys and ciphertexts through the same channel realization causes the dominant $q$-scale encryption terms to cancel algebraically during decryption, avoiding $q$-amplified residual errors.
\item We integrate the protocol with ZO FL over slowly varying LoS-dominant channels, where each device transmits a single encrypted scalar per round and the communication/encryption overhead is independent of the model dimension.
\item We prove that the residual encryption and channel noise preserves the $O(1/\sqrt{K})$ convergence rate, where $K$ is the number of communication rounds, up to a negligible noise floor.
\item We show that the protocol is secure against a non-trusted server and HBC devices. We validate the analysis using MNIST experiments under fading-channel models.
\end{itemize}


The remainder of the paper is organized as follows. Section~II reviews existing HE-over-the-air methods and their limitations. Sections~III and~IV present the system model and the proposed protocol. Section~V provides the convergence analysis results. Section~VI presents numerical results, and Section~VII concludes the paper.

\section{Related Work}
 
 Two recent methods combine HE with over-the-air computation. Wang et al.\ \cite{heairfed} proposed HEAirFed, which uses single-key CKKS: each device encrypts its full gradient vector and the server aggregates over the air using MIMO beamforming with CSI at the devices and the server. Xie et al.\ \cite{airhe} proposed AirHE, a single-key LWE scheme that transmits ciphertext digits as nested-lattice codewords and relies on channel estimation and pre-equalization. Both show that HE can be combined with over-the-air aggregation in single-key settings, but did not address the multi-key setting of this paper, where each device keeps an independent secret key. 
Table~\ref{tab:comparison} shows that the proposed protocol adopts a multi-client architecture, providing stronger resistance against HBC clients and collision-related attacks. Moreover, the scheme follows the Microsoft SEAL recommended RLWE parameters~\cite{sealcrypto}, where the ring dimension is set to $n=4096$ or $n=8192$, values that are considered computationally secure against practical attacks.
 \begin{table}[!ht]
\centering
\caption{Comparison of HE-over-the-air methods.}
\label{tab:comparison}
\footnotesize
\begin{tabular}{@{}lccc@{}}
\toprule
& \textbf{AirHE} \cite{airhe} & \textbf{HEAirFed} \cite{heairfed} & \textbf{Proposed} \\
\midrule
HE scheme & LWE & CKKS & xMK-CKKS \\
Key structure & Single & Single & Multi-key \\
RLWE dim.\ $n$ & $9   $ & $4096$ & $4096, 8192$ \\
Ciph.\ mod.\ $q$ & $6560$ & $2^{30}$ & $2^{109}, 2^{218}$ \\
CSI required & Yes & Yes & No \\
Pre-equal. & Yes & Yes (beamf.) & No \\
Enc.\ payload & Scalar & $\bm{\nabla} F \in\mathbb{R}^d$ & Scalar \\
HBC client & Vuln. & Vuln. & Secure \\
\bottomrule
\end{tabular}

\end{table}
 
\subsection*{Why Existing Methods Do Not Extend to xMK-CKKS} Two limitations separate these methods from the setting of this paper. The first is cryptographic and holds for any modulus: both use single-key HE, so any HBC client holding the shared key can decrypt every other client's update. Single-key HE protects against external adversaries, but does not give resistance to client-level collusion. xMK-CKKS assigns each device its own secret key and requires all devices to contribute decryption shares, which gives resistance against HBC devices.
 
The second is tied to the modulus. Single-key CKKS with beamforming scales to a large modulus without difficulty. The multi-key case is different: the large gap between the modulus $q$ and the scaling factor $\Lambda$ means that any residual channel-estimation error is multiplied by the $q$-scale ciphertext terms and buries the $\Lambda$-scale message (details are presented in Section~\ref{sec:xmk_ckks}). For example, we will show in Fig.~\ref{fig:eq_breaks} of Section~\ref{sec:mumerical_result} that both zero-forcing (ZF) and MMSE pre-equalization diverge from the first iteration at \(q \approx 2^{109}\) . To overcome this limitation, the proposed protocol retransmits the partial public keys and ciphertexts through the same channel realization, allowing the \(q\)-scaled terms to cancel algebraically while only small encryption and channel noise terms remain.
\section{System Model}
Throughout, $i, j \in \{1,\ldots,N\}$ index the devices and $k \in \{0,\ldots,K\}$ the communication rounds.

\subsection{Wireless Zero-Order Federated Learning Model}

Consider an FL framework with $N$ edge devices and a central server coordinating the training of a global model ${\bm \theta}\in\mathbb{R}^d$ over a wireless network. Each device trains on its private local dataset. Let $\mathcal{N}=\{1,\ldots,N\}$ denote the set of devices, and let $F_i:\mathbb{R}^d\to\mathbb{R}$ be the local loss associated with device~$i$. The global objective is to minimize
\begin{equation}\label{eq:global_loss}
F({\bm \theta})
=
\sum_{i=1}^{N} F_i({\bm \theta}),
\qquad
F_i({\bm \theta})
=
\E_{\xi_i\sim\mathcal{D}_i}
\left[
f_i({\bm \theta},\xi_i)
\right],
\end{equation}
where $\xi_i$ is sampled from the local data distribution $\mathcal{D}_i$. The functions $F$, $F_i$, and $f_i$ are allowed to be nonconvex.

Let $h_{i,k}$ denote the channel coefficient between device~$i$ and the server during communication round~$k$. We consider slowly varying block-fading channels with a nonzero mean due to a LoS component:
\begin{equation}\label{eq:channel_stats}
\E[h_{i,k}] = \mu_i \neq 0,
\qquad
\E[h_{i,k}^2] = \Omega_i,
\qquad
1\leq i\leq N.
\end{equation}
The channel coefficients are assumed independent, but not necessarily identically distributed, across devices. Within each round~$k$, $h_{i,k}$ remains constant for each device~$i$ over all protocol phases. Thus, each round corresponds to one fading block, whereas independent fading is assumed from one round to the next. This model is motivated by short-range indoor THz links with fixed or low-mobility devices. Such links are often LoS-dominant and highly directional, with sparse multipath and high Rician $K$-factors. When the transmitter, receiver, and dominant scatterers remain stationary over the duration of a communication round, the coherence time can cover all phases of the proposed protocol~\cite{thz_indoor,chen_thz_2021}.

We consider a ZO method in which the channel disturbance is incorporated into the learning process, in the same spirit as~\cite{ezofl}. At each round~$k$, every device~$i$ computes the standard two-point ZO difference of its local loss~\cite{duchi,agarwal,ezofl}:
\begin{equation}\label{eq:delta_f}
\Delta f_{i,k}
=
f_i({\bm \theta}_k+\gamma_k{\bm \Phi}_k,\xi_{i,k})
-
f_i({\bm \theta}_k-\gamma_k{\bm \Phi}_k,\xi_{i,k}),
\end{equation}
where ${\bm \Phi}_k=(\Phi_k^1,\ldots,\Phi_k^d)^\top$ is a perturbation vector with i.i.d.\ entries satisfying $\E[(\Phi_k^j)^2]=b_1$ and $\|{\bm \Phi}_k\|\leq b_2$, and $\gamma_k > 0$ is the smoothing parameter. The perturbation sequence is generated randomly and is made available to all devices.

When the channel has a nonzero mean, $\E[h_{i,k}]=\mu_i\neq0$, in the ZO FL scheme~\cite{ezofl}, 
each device knows, or estimates, the long-term channel mean $\mu_i$ and transmits each round the single scalar $\Delta f_{i,k}/\mu_i$. The server receives
\begin{equation}\label{eq:Y2_ezofl}
Y_k
=
\sum_{i\in\mathcal{N}}
\frac{\Delta f_{i,k}}{\mu_i}h_{i,k}
+
n_k,
\end{equation}
where $n_k\sim\mathcal{N}(0,\sigma_n^2)$, and broadcasts $Y_k$ to all devices. Each device then forms the ZO gradient estimate
\begin{equation}\label{eq:gk_ezofl}
{\bm g}_k
=
{\bm \Phi}_kY_k.
\end{equation}
The channel coefficient $h_{i,k}$ is not estimated or removed; it enters the gradient estimate as a perturbation. In~\cite{ezofl}, it is shown that including the channel disturbance in the learning process does not change the convergence rate, which remains of order $O(1/\sqrt{K})$ in nonconvex settings. In Theorem~\ref{thm:th2}, we extend the analysis to the encrypted setting and show that the proposed protocol preserves the same rate up to an explicit noise floor, namely $O(1/\sqrt{K})+\rho$, where $\rho$ is negligible under the considered parameter regime.

\subsection{xMK-CKKS}
\label{sec:xmk_ckks}
xMK-CKKS is considered one of the most prominent and suitable HE schemes for FL scenarios, as it provides multi-key security, supports floating-point computations, and relies on the RLWE lattice problem, for which no practical attacks are currently known under recommended parameters.\\
The xMK-CKKS scheme~\cite{mkckks} operates over the ring
\(
R_q=\mathbb{Z}_q[X]/(X^n+1),
\)
where \(n\) denotes the polynomial ring degree (RLWE dimension) and \(q\) represents the ciphertext coefficient modulus.\\
The encoded message uses a scaling factor \(\Lambda\) to balance numerical precision and noise growth.
Different cryptographic components and parameters of the xMK-CKKS scheme are summarized in the steps below:
\begin{itemize}

\item \textbf{Secret-key generation.} Each device $d_i$ independently samples its secret key \(
s_i \in \{-1,1\}^n\) according to Bernoulli(1/2). 
\item \textbf{Partial public-key generation.} Each device computes a partial public key
\(
b_i=-s_i a+e_i \mod q,
\)
where $a$ is a common public polynomial and
\(
e_i\sim\mathcal{N}(0,\sigma_e^2)^{\otimes n},
\)
with $\sigma_e=3.2$.

\item \textbf{Aggregated public-key construction.} The partial public keys are summed to form the aggregated public key
\(
\tilde{b}=\sum_{i=1}^{N}b_i,
\)
which gives 
\begin{equation}\label{eq:cancel_digital}
\tilde{b}+a\sum_{i=1}^{N}s_i
=
\tilde{b}+aS
=
\sum_{i=1}^{N}e_i
:=
E_{agg},
\end{equation}
where
\(
S:=\sum_{i=1}^{N}s_i\). 

\item \textbf{Encryption.} Each device encrypts its plaintext $m_i$ using the aggregated public key
\(
ct_i=(c_{0,i},c_{1,i}),
\)
where
\(
c_{0,i}=v_i\tilde{b}+m_i+e_i^{(0)},
\)
and
\(
c_{1,i}=v_i a+e_i^{(1)} \mod q
\) , where $v_i$ is a fresh random polynomial used to randomize the encryption process.

\item \textbf{Homomorphic aggregation.} The server aggregates the ciphertexts
\(
C_{\mathrm{sum}}
=
\sum_{i=1}^{N}ct_i
=
(C_{\mathrm{sum},0},C_{\mathrm{sum},1}).
\)

\item \textbf{Partial decryption.} Each device computes a decryption share
\(
D_i=s_i C_{\mathrm{sum},1}+e_i^{*}\mod q,\) where $e_i^{*}$ is the effective decryption-noise term associated with device $d_i$, modeled as a discrete Gaussian random variable with variance determined by the underlying encryption errors.
\item \textbf{Collaborative reconstruction.} The server combines the aggregated ciphertext and all decryption shares
\(
C_{\mathrm{sum},0}+\sum_{i=1}^{N}D_i \mod q.
\)
After expansion, the large key-dependent terms involving $a$ cancel algebraically, leaving
\(
\sum_{i=1}^{N}m_i+\text{bounded noise}.
\)
Therefore,
\(
C_{\mathrm{sum},0}+\sum_{i=1}^{N}D_i
\approx
\sum_{i=1}^{N}m_i.
\)
\end{itemize}
A detailed correctness analysis of the collaborative decryption procedure in xMK-CKKS is provided in Appendix A.

\subsection{Why Naive OTA xMK-CKKS Fails}

We now explain why directly reusing digitally generated xMK-CKKS public keys over a wireless channel fails. Suppose first that, in round~$k$, the partial public keys are transmitted over the air through the same channel realization $h_{i,k}$. The server receives
\begin{align*}
\tilde{b}_k
&=
\sum_{i=1}^N h_{i,k} b_i + w_{b,k} \nonumber\\
&=
-\left(\sum_{i=1}^N h_{i,k}s_i\right)a
+
\sum_{i=1}^N h_{i,k}e_i
+
w_{b,k}.
\end{align*}
Define
\begin{equation}
\tilde S_k:=\sum_{i=1}^N h_{i,k}s_i,
\qquad
\tilde E_{\mathrm{agg},k}
:=
\sum_{i=1}^N h_{i,k}e_i+w_{b,k}.
\label{eq:def}
\end{equation}
Then the channel-weighted cancellation identity becomes
\begin{equation}\label{eq:cancel_ota}
\tilde{b}_k
+
\tilde S_k a
=
\tilde E_{\mathrm{agg},k},
\end{equation}
where the large $q$-scale term cancels, since $\tilde b_k$ and $\tilde S_k$ are generated with the same channel realization.

The failure occurs if the public keys are aggregated once during an initial setup phase and then reused in later rounds. In that case, the server receives $\tilde b_0$, which is tied to the setup channel realization $h_{i,0}$, whereas the decryption shares in round~$k$ produce $\tilde S_k$, which is tied to the current channel realization $h_{i,k}$. The cancellation becomes
\begin{equation*}\label{eq:residual_full}
\tilde b_0
+
\tilde S_k a
=
(\tilde S_k-\tilde S_0)a
+
\tilde E_{\mathrm{agg},0}.
\end{equation*}
where $\tilde S_0 := \sum_{i=1}^N h_{i,0}s_i$ is the channel-weighted secret from the setup round, so that $\tilde b_0 = -\tilde S_0 a + \tilde E_{\mathrm{agg},0}$. 
The first term on the right-hand side is no longer an encryption-noise term. It is a channel-mismatch residual multiplied by the public polynomial $a\in R_q$. Since the remaining coefficients are reduced modulo \(q\), with \(q \approx 2^{109}\) or \(q \approx 2^{218}\) following the Microsoft SEAL recommendations~\cite{sealcrypto}, the residual term may reach a magnitude proportional to \(q\).

By contrast, the encoded message has a scale of approximately $\Lambda$, with $\Lambda=2^{40}$. Thus, a $q$-scale residual can be roughly $q/\Lambda \approx 2^{70}$ times larger than the encoded signal, which destroys decryption. The proposed protocol avoids this failure by retransmitting the partial public keys in every communication round, through the same channel realization that carries the encrypted scalar and the decryption shares. The proposed protocol does not try to estimate or invert the wireless channel; instead, it forces the public-key and decryption-share terms to experience the same fading coefficients, so that the large-modulus terms cancel algebraically. The complete protocol is described in the next section.

\section{Proposed Protocol}
\label{sec:pp}

\begin{figure*}[t]
\centering
\begin{tikzpicture}[>=stealth, font=\scriptsize, yscale=1.05]

\node[font=\footnotesize\bfseries, text=orange!60!black] at (7.4,0.42) {OTA Aggregation};

\draw[rounded corners=6pt, fill=gray!8, draw=black!60, line width=1pt] (11.0,0.5) rectangle (15.8,-5.0);
\draw[line width=1.2pt, black!70] (13.4,0.5) -- (13.4,1.05);
\draw[line width=0.8pt, black!70] (13.15,0.5) -- (13.4,1.05) -- (13.65,0.5);
\fill[black!70] (13.4,1.05) circle (0.04);
\draw[black!40, line width=0.6pt] (13.05,1.1) arc[start angle=200, end angle=340, radius=0.22];
\draw[black!40, line width=0.6pt] (12.95,1.2) arc[start angle=200, end angle=340, radius=0.32];
\node[font=\normalsize\bfseries] at (13.4,0.15) {Server};

\node[align=left, anchor=west, font=\scriptsize] at (11.2,-0.2) {%
Receives $\tilde{b}_k$};
\node[align=left, anchor=west, font=\scriptsize] at (11.2,-0.45) {%
Broadcasts $\tilde{b}_k$ to devices};
\draw[gray!40] (11.0,-0.75) -- (15.8,-0.75);

\node[align=left, anchor=west, font=\scriptsize] at (11.2,-1.4) {%
Receives $(\tilde{c}_{0,k}, \tilde{c}_{1,k})$};
\node[align=left, anchor=west, font=\scriptsize] at (11.2,-1.65) {%
Broadcasts $\tilde{c}_{1,k}$ to devices};
\draw[gray!40] (11.0,-1.95) -- (15.8,-1.95);

\node[align=left, anchor=west, font=\scriptsize] at (11.2,-2.7) {%
Receives $\tilde{D}_k$};
\draw[gray!40] (11.0,-3.0) -- (15.8,-3.0);

\node[align=left, anchor=north west, font=\scriptsize, text width=4.2cm] at (11.2,-3.1) {%
\textbf{Recovery:}\\
$\hat{M}_k = \tilde{c}_{0,k} + \tilde{D}_k$\\
$\tilde{b}_k + \tilde{S}_k a = \tilde{E}_{\text{agg},k}$\;\textit{(q cancels)}\\
$M_k = \sum_i \frac{h_{i,k}}{\mu_i}\Delta f_{i,k} + \varepsilon_k$\\
Broadcasts $M_k$ to devices};

\fill[gray!8, rounded corners=5pt] (0.55,0.18) rectangle (7.35,-4.08);
\draw[rounded corners=5pt, draw=gray!50, line width=0.6pt] (0.55,0.18) rectangle (7.35,-4.08);
\node[font=\footnotesize\bfseries, text=gray!60] at (2.1, 0.35) {Edge Devices};

\node[font=\small\bfseries, anchor=east] at (0.5, -0.2) {Phase 1};
\node[font=\tiny, anchor=east, text=black!50] at (0.5, -0.45) {Key Aggr.};

\draw[rounded corners=2pt, fill=white, draw=blue!40] (0.7, 0.05) rectangle (3.5,-0.12) node[midway] {Device 1: sends $b_1$};
\draw[rounded corners=2pt, fill=white, draw=blue!40] (0.7,-0.2) rectangle (3.5,-0.37) node[midway] {Device 2: sends $b_2$};
\draw[rounded corners=2pt, fill=white, draw=blue!40] (0.7,-0.45) rectangle (3.5,-0.62) node[midway] {Device $N$: sends $b_N$};

\draw[rounded corners=4pt, fill=orange!8, draw=orange!40] (5.3, 0.1) rectangle (9.5,-0.68);
\node at (7.4, -0.29) {$\tilde{b}_k = \sum_i h_{i,k} b_i + w_{b,k}$};

\draw[->, thick, red!60!black] (3.5,-0.04) -- (5.3,-0.15);
\draw[->, thick, red!60!black] (3.5,-0.29) -- (5.3,-0.29);
\draw[->, thick, red!60!black] (3.5,-0.54) -- (5.3,-0.43);
\draw[->, thick, red!60!black] (9.5,-0.29) -- (11.0,-0.29);

\draw[->, thick, dashed, blue!50!black] (11.0,-0.92) -- (7.35,-0.92) node[midway, below, font=\scriptsize, fill=white, inner sep=1pt] {broadcast $\tilde{b}_k$};
\draw[gray!30, dashed] (0.0,-1.08) -- (16.0,-1.08);

\node[font=\small\bfseries, anchor=east] at (0.5, -1.5) {Phase 2};
\node[font=\tiny, anchor=east, text=black!50] at (0.5, -1.75) {Encryption};

\draw[rounded corners=2pt, fill=white, draw=blue!40] (0.7,-1.20) rectangle (4.5,-1.40) node[midway] {Dev.\ 1: encrypts $\Delta f_{1,k}/\mu_i$};
\draw[rounded corners=2pt, fill=white, draw=blue!40] (0.7,-1.48) rectangle (4.5,-1.68) node[midway] {Dev.\ 2: encrypts $\Delta f_{2,k}/\mu_i$};
\draw[rounded corners=2pt, fill=white, draw=blue!40] (0.7,-1.76) rectangle (4.5,-1.96) node[midway] {Dev.\ $N$: encrypts $\Delta f_{N,k}/\mu_i$};

\draw[rounded corners=4pt, fill=orange!8, draw=orange!40] (5.3,-1.17) rectangle (9.5,-1.94);
\node at (7.4, -1.56) {$(\tilde{c}_{0,k},\;\tilde{c}_{1,k})$};

\draw[->, thick, red!60!black] (4.5,-1.30) -- (5.3,-1.38);
\draw[->, thick, red!60!black] (4.5,-1.58) -- (5.3,-1.56);
\draw[->, thick, red!60!black] (4.5,-1.86) -- (5.3,-1.74);
\draw[->, thick, red!60!black] (9.5,-1.56) -- (11.0,-1.56);

\draw[->, thick, dashed, blue!50!black] (11.0,-2.18) -- (7.35,-2.18) node[midway, below, font=\scriptsize, fill=white, inner sep=1pt] {broadcast $\tilde{c}_{1,k}$};
\draw[gray!30, dashed] (0.0,-2.33) -- (16.0,-2.33);

\node[font=\small\bfseries, anchor=east] at (0.5, -2.75) {Phase 3};
\node[font=\tiny, anchor=east, text=black!50] at (0.5, -3.0) {Dec.\ Share};

\draw[rounded corners=2pt, fill=white, draw=blue!40] (0.7,-2.48) rectangle (3.5,-2.65) node[midway] {Device 1: sends $D_1$};
\draw[rounded corners=2pt, fill=white, draw=blue!40] (0.7,-2.73) rectangle (3.5,-2.90) node[midway] {Device 2: sends $D_2$};
\draw[rounded corners=2pt, fill=white, draw=blue!40] (0.7,-2.98) rectangle (3.5,-3.15) node[midway] {Device $N$: sends $D_N$};

\draw[rounded corners=4pt, fill=orange!8, draw=orange!40] (5.3,-2.43) rectangle (9.5,-3.2);
\node at (7.4, -2.82) {$\tilde{D}_k = \sum_j h_{j,k} D_j + w_{D,k}$};

\draw[->, thick, red!60!black] (3.5,-2.57) -- (5.3,-2.62);
\draw[->, thick, red!60!black] (3.5,-2.82) -- (5.3,-2.82);
\draw[->, thick, red!60!black] (3.5,-3.07) -- (5.3,-3.02);
\draw[->, thick, red!60!black] (9.5,-2.82) -- (11.0,-2.82);

\draw[gray!30, dashed] (0.0,-3.4) -- (16.0,-3.4);

\node[font=\small\bfseries, anchor=east] at (0.5, -3.75) {Phase 4};
\node[font=\tiny, anchor=east, text=black!50] at (0.5, -4.0) {Update};

\draw[rounded corners=2pt, fill=white, draw=blue!40] (0.7,-3.55) rectangle (7.2,-3.95);
\node at (3.95, -3.75) {All devices: ${\bm g}_k = {\bm \Phi}_k M_k$, \quad ${\bm \theta}_{k+1} = {\bm \theta}_k - \eta_k {\bm g}_k$};

\draw[->, thick, dashed, blue!50!black] (11.0,-3.75) -- (7.35,-3.75) node[midway, below, font=\scriptsize, fill=white, inner sep=1pt] {broadcast $M_k$};

\draw[rounded corners=3pt, fill=white, draw=gray!40] (3.0,-4.4) rectangle (9.5,-4.9);
\draw[->, thick, red!60!black] (3.2,-4.55) -- (4.2,-4.55);
\node[anchor=west, font=\scriptsize] at (4.3,-4.55) {Uplink (same resource, different channels)};
\draw[->, thick, dashed, blue!50!black] (3.2,-4.75) -- (4.2,-4.75);
\node[anchor=west, font=\scriptsize] at (4.3,-4.75) {Downlink (broadcast)};

\end{tikzpicture}
\caption{The proposed four-phase protocol for one round~$k$. All devices transmit simultaneously on the same wireless resource. Each device holds its own secret key~$s_i$. The server does not estimate any channel coefficient. In Phase~4, the $q$-scale terms cancel algebraically because $\tilde{b}_k$ and $\tilde{S}_k$ use the same channel realization~$h_{i,k}$.}
\label{fig:protocol}
\end{figure*}

The key observation is that the cancellation identity present in Eq.~\eqref{eq:cancel_digital} fails over the air because the public keys $b_i$ are aggregated once during setup, while the channel coefficients $h_{i,k}$ change every round. To fix this, we retransmit $b_i$ every round through the same channel realization $h_{i,k}$ that carries the ciphertext and decryption shares. This ensures that the cancellation identity survives in channel-weighted form.

The protocol consists of four phases per round, illustrated in Fig.~\ref{fig:protocol}. We add the following assumption on the HE randomness.

\noindent \textbf{Assumption~1.} (HE randomness): The fresh randomness $v_{i,k}$, $e_{0,i,k}$, $e_{1,i,k}$, $e_{j,k}^*$ and the channel noises $w_{b,k}$, $w_{0,k}$, $w_{1,k}$, $w_{D,k}$ are mutually independent, zero-mean, and independent of ${\bm \Phi}_k$, $h_{\cdot,k}$, $\xi_{\cdot,k}$.

We assume a perfect downlink: messages broadcasted by the server 
are received by all devices without errors. This is a standard assumption in OTA FL \cite{amiri,ezofl}.

\subsection{Phase~1: Fresh Key Aggregation}

Each client~$i$ transmits its partial public key $b_i$ over the uplink. The server receives:
\begin{equation}\label{eq:bk}
\tilde{b}_k = \sum_{i=1}^{N} h_{i,k} b_i + w_{b,k} = -\tilde{S}_k a + \tilde{E}_{\mathrm{agg},k},
\end{equation}
where $\tilde{S}_k$ 
and $\tilde{E}_{\mathrm{agg},k}$ 
are given by Eq.~\eqref{eq:def}, and the channel-weighted cancellation identity~\eqref{eq:cancel_ota} holds.

\subsection{Phase~2: Encryption}

The server broadcasts $\tilde{b}_k$ to all clients. Each client~$i$ encodes its ZO estimate as $m_{i,k} = \lfloor \Lambda \Delta f_{i,k}/\mu_i \rceil$ and transmits the ciphertext pair $(c_0^{(i)}, c_1^{(i)})$:
\begin{equation}
c_0^{(i)} = v_{i,k}\tilde{b}_k + m_{i,k} + e_{0,i,k}, \quad c_1^{(i)} = v_{i,k} a + e_{1,i,k}, \label{eq:c01i}
\end{equation}
where $v_{i,k}$ denotes a fresh encryption randomness, while $e_{0,i,k}$ and $e_{1,i,k}$ represent small error polynomials sampled according to a Gaussian distribution, as described in Section~\ref{sec:xmk_ckks}. The server receives the aggregated ciphertexts over the air:
\begin{align}
\tilde{c}_{0,k} &= V_k \tilde{b}_k + \sum_i h_{i,k} m_{i,k} + \sum_i h_{i,k} e_{0,i,k} + w_{0,k}, \label{eq:c0agg}\\
\tilde{c}_{1,k} &= V_k a + \sum_i h_{i,k} e_{1,i,k} + w_{1,k}, \label{eq:c1agg}
\end{align}
where $V_k := \sum_i h_{i,k} v_{i,k}$.

\subsection{Phase~3: Decryption Share}

The server broadcasts $\tilde{c}_{1,k}$. Each client~$j$ computes and transmits a partial decryption share $D_{j,k} = s_j \tilde{c}_{1,k} + e_{j,k}^*$, where $s_j$ denotes the secret key of client $j$ and $e^{*}_{j,k}$ is an additional Gaussian noise term used to protect the privacy of the partial decryption share. The server receives:
\begin{equation}\label{eq:Dk}
\tilde{D}_k = \tilde{S}_k \tilde{c}_{1,k} + \sum_j h_{j,k} e_{j,k}^* + w_{D,k}.
\end{equation}

\subsection{Phase~4: Recovery}

The server forms $\hat{M}_k := \tilde{c}_{0,k} + \tilde{D}_k$. Substituting Eq.~\eqref{eq:c0agg}, Eq.~\eqref{eq:c1agg}, and Eq.~\eqref{eq:Dk}:
\begin{align}
\hat{M}_k &= V_k \tilde{b}_k + \tilde{S}_k(V_k a + \sum_i h_{i,k} e_{1,i,k} + w_{1,k}) \notag\\
&\quad + \sum_i h_{i,k} m_{i,k} + \sum_i h_{i,k} e_{0,i,k} + w_{0,k} \notag\\
&\quad + \sum_j h_{j,k} e_{j,k}^* + w_{D,k} \notag\\
&= V_k(\tilde{b}_k + \tilde{S}_k a) + \sum_i h_{i,k} m_{i,k} + \zeta_k, 
\label{eq:Mhat_expand}
\end{align}
where 

\begin{multline}\label{eq:zeta}
\zeta_k := \tilde{S}_k\sum_i h_{i,k} e_{1,i,k} + \tilde{S}_k w_{1,k}
+ \sum_i h_{i,k} e_{0,i,k} + w_{0,k}\\
+ \sum_j h_{j,k} e_{j,k}^* + w_{D,k}.
\end{multline}

\subsection{Ring-Scale Cancellation}

The term $V_k(\tilde{b}_k + \tilde{S}_k a)$ contains two $q$-scale components $V_k\tilde{b}_k$ and $\tilde{S}_k V_k a$, each of order $q \approx 2^{109}$ or $2^{218}$, far larger than the message $\Lambda\sum_i h_{i,k}\Delta f_{i,k}$ of order $\Lambda \approx 2^{40}$. Substituting $\tilde{b}_k = -\tilde{S}_k a + \tilde{E}_{\mathrm{agg},k}$ from Eq.~\eqref{eq:bk} gives $V_k(\tilde{b}_k + \tilde{S}_k a) = V_k\tilde{E}_{\mathrm{agg},k}$: the $q$-scale terms cancel exactly, leaving a product of two small polynomials whose second moment is of order $n^2\sigma_e^2\big(\sum_i\Omega_i\big)^2$ (more details are found in Appendix B), independent of $q$. The remaining expression is:
\begin{equation*}\label{eq:Mhat_final}
\hat{M}_k = \Lambda \sum_i \frac{h_{i,k}}{\mu_i}\Delta f_{i,k} + \hat{\varepsilon}_k,
\end{equation*}
where


\begin{equation}\label{eq:varepshat}
\hat{\varepsilon}_k = V_k \tilde{E}_{\mathrm{agg},k} + \zeta_k,
\end{equation}

where $\zeta_k$ is given in Eq.~\eqref{eq:zeta}.
 
 Decoding by $1/\Lambda$:
\begin{equation}\label{eq:Mk}
M_k = \frac{1}{\Lambda}\hat{M}_k = \sum_i \frac{h_{i,k}}{\mu_i} \Delta f_{i,k} + \varepsilon_k, \quad \varepsilon_k := \frac{1}{\Lambda}\hat{\varepsilon}_k.
\end{equation}

Each ring coefficient is an integer in $\{0,\ldots,q-1\}$, transmitted as an analog symbol; rounding the real-valued superposition recovers the correct ring element, and the algebraic identities of Phases~1-4 hold under modular arithmetic, when the noise $\hat{\varepsilon}$ is small with high probability~\cite{mkckks}.

\subsection{Security of the OTA Protocol}

Retransmitting $b_i$ every round raises the question of whether the server can extract individual $b_i$ from the $K$ superpositions $\tilde{b}_k = \sum_i h_{i,k} b_i + w_{b,k}$. The system has $K$ observations and $KN + N$ unknowns (channel coefficients and partial keys) and is under-determined since the server never observes individual $h_{i,k}$. Even if an adversary recovers each \(b_i\), this information is already publicly available in digital xMK-CKKS. Recovering the corresponding secret key \(s_i\) from
$b_i = -s_i a + e_i$ remains computationally hard under the RLWE assumption, whose security is reducible to hard lattice problems such as the Closest Vector Problem (CVP)~\cite{mkckks}. The OTA setting therefore provides two layers of protection: the channel layer prevents the server from separating individual transmissions, while the cryptographic layer protects each $b_i$ even if it were recovered. In the digital setting only the second layer is present.

\subsection{Gradient Estimator}
Each device forms the gradient estimate from the $M_k$:
\begin{equation}\label{eq:gk_enc}
{\bm g}_k = {\bm \Phi}_k \, M_k = {\bm \Phi}_k\left(\sum_i \frac{h_{i,k}}{\mu_i} \Delta f_{i,k} + \varepsilon_k\right),
\end{equation}
matching the structure of Eq.~\eqref{eq:gk_ezofl} with the channel noise $n_k$ replaced by the decoded HE noise $\varepsilon_k$. The model update is ${\bm \theta}_{k+1} = {\bm \theta}_k - \eta_k {\bm g}_k$.
\subsection{Algorithm}
Algorithm~\ref{algo:Enc_EZOFL} summarizes the main steps of the proposed EEZOFL protocol based on xMK-CKKS.
\begin{algorithm}[H]
\caption{EEZOFL}
\textbf{Input:} initial values ${\bm \theta}_0$, $\eta_0$, $\gamma_0$, channel mean $\mu_i$
\begin{algorithmic}[1]
\FOR{$k = 0, \ldots, K$}
\STATE Each device~$i$ computes $\Delta f_{i,k}$ using its local data and encodes $m_{i,k} = \lfloor \Lambda \Delta f_{i,k}/\mu_i \rceil$.
\STATE \textit{Phase~1:} Each device transmits $b_i$. The server receives $\tilde{b}_k$ given in Eq.~\eqref{eq:bk} and broadcasts it.
\STATE \textit{Phase~2:} Each device transmits $(c_0^{(i)}, c_1^{(i)})$ given in Eq.~\eqref{eq:c01i}. The server receives $(\tilde{c}_{0,k}, \tilde{c}_{1,k})$ and broadcasts $\tilde{c}_{1,k}$.
\STATE \textit{Phase~3:} Each device~$j$ transmits $D_{j,k} = s_j \tilde{c}_{1,k} + e_{j,k}^*$. The server receives $\tilde{D}_k$ given in Eq.~\eqref{eq:Dk}.
\STATE \textit{Phase~4:} The server computes $M_k$ given in Eq.~\eqref{eq:Mk} and broadcasts $M_k$ to all devices.
\STATE Each device multiplies the received value by ${\bm \Phi}_k$ to obtain ${\bm g}_k$ given in Eq.~\eqref{eq:gk_enc}.
\STATE Each device updates the model ${\bm \theta}_{k+1} = {\bm \theta}_k - \eta_k {\bm g}_k$.
\ENDFOR
\end{algorithmic}
\label{algo:Enc_EZOFL}
\end{algorithm}
\subsection{Encryption Overhead and Feasibility}
\label{sec:Enc_Over_Head}
Table~\ref{tab:he_params} presents two secure xMK-CKKS parameter sets, corresponding to \((n, log_{2}q)=(4096,109)\) and \((n, log_{2}q)=(8192,218)\) , following the HE Standard~\cite{HEStandard} and Microsoft SEAL~\cite{sealcrypto} recommendations, both providing an estimated \(128\)-bit classical security level.\\
The uplink communication per round is $4n\lceil\log_2 q\rceil$ bits per device. 
This includes the transmission of the aggregated public key $\tilde{b}_k$ in Phase~1, 
the two ciphertext components $(c_0^{(i)},c_1^{(i)})$ in Phase~2, and the partial 
decryption share $D_{i,k}$ in Phase~3. Therefore, the transmission time per device 
at bandwidth $B$ is given by
\(
T_{\mathrm{tx}}=\frac{4n\lceil\log_2 q\rceil}{B}.
\)
Hence the transmission time reported in Table~\ref{tab:he_params} corresponds to 
the communication performed during the first three phases of the proposed 
protocol described in Section~\ref{sec:pp}. Therefore, increasing the security 
level also increases the communication overhead, which directly impacts the 
transmission time. At $B=1$~THz, both configurations transmit in less than 
$8\mu~s$, which remains well within the coherence time of static indoor THz 
channels.

The encryption benchmarks were obtained on an Intel Core i9-14900HX platform (24 cores, 32 threads, 32\,GB DDR5 RAM). Increasing the parameter set from $(n,\log_2 q)=(4096,109)$ to $(8192,218)$ raises the average encryption time from 23.92\,ms to 55.56\,ms due to the increased complexity of polynomial arithmetic over larger rings. Despite this increase, the encryption overhead remains practical for FL applications.
From a computational perspective, encryption and decryption-share operations have a complexity of $O(n\log n)$ per device per round due to the use of the number theoretic transform (NTT). This explains the increase in encryption latency when moving from $(n,\log_2 q)=(4096,109)$ to $(8192,218)$. In addition, the decoded noise variance scales as $\bar{\sigma}_\varepsilon^2 \propto n/\Lambda^2$ as shown in Appendix~B, while the resulting noise floor $\rho$ remains negligible for both sets of parameters.

Finally, as shown in Table~\ref{tab:he_params}, the ciphertext size increases from
$109$~KB for $n=4096$ to $446$~KB for $n=8192$, while the plaintext is represented by a floating-point value of only $8$ bytes. This corresponds to communication expansion factors of approximately $1.36\times10^4$ and $5.58\times10^4$, respectively. Although significant, such an increase is inherent to HE-based systems, where higher security levels and larger parameter sets lead to larger ciphertexts. The results therefore highlight the classical tradeoff between security and communication efficiency in HE.\\

\begin{table}[!ht]
\centering
\caption{HE parameter sets and protocol overhead.}
\label{tab:he_params}
\small
\begin{tabular}{@{}lcc@{}}
\toprule
& $n\!=\!4096$ & $n\!=\!8192$ \\
\midrule
$\log_2 q$ & 109 & 218 \\
$\Lambda$ & $2^{40}$ & $2^{40}$ \\
Security (HE Std.) & $128$-bit & $128$-bit \\
Ciphertext size & 109~KB & 446~KB \\
Encryption time (mean) & 23.92~ms &
55.56~ms\\	
Uplink / round & 1.8~Mbit & 7.1~Mbit \\
$T_{\mathrm{tx}}$ ($B\!=\!1$~THz) & 1.8~$\mu$s & 7.1~$\mu$s \\
Storage / device & 109~KB & 446~KB \\
Encrypt (NTT) & \multicolumn{2}{c}{$O(n\log n)$ per round} \\
$\rho$ (noise floor) & $\sim\!2\!\times\!10^{-16}$ & $\sim\!4\!\times\!10^{-16}$ \\
\bottomrule
\end{tabular}
\end{table}

\section{Convergence Analysis}
\label{sec:Con_Analysis}

We retain Assumptions~3.1--3.4 of~\cite{ezofl}: $L$-smoothness of the global objective function $F(\cdot)$ and bounded Hessian $\|\nabla^2 F_i\|_2 \leq b$ (Assumption~3.1), Lipschitz continuity of $f_i(\cdot, \xi_i)$ with constant $L$ (Assumption~3.2), step-size conditions $\sum_k \eta_k\gamma_k = \infty$, $\sum_k \eta_k\gamma_k^3 < \infty$, $\sum_k \eta_k^2\gamma_k^2 < \infty$ (Assumption~3.3), and perturbation vector ${\bm \Phi}_k$ with i.i.d.\ entries satisfying $\E[({\Phi}_k^j)^2] = b_1$ and $\|{\bm \Phi}_k\| \leq b_2$ (Assumption~3.4). We assume a perfect downlink: the server broadcasts $\tilde{b}_k$, $\tilde{c}_{1,k}$, and $M_k$ to all devices without error. Let $\calH_k = \{{\bm \theta}_0, \xi_0, \ldots, {\bm \theta}_{k-1}, \xi_{k-1}, {\bm \theta}_k\}$ denote the history up to and including the model ${\bm \theta}_k$ but excluding the current sample $\xi_k$.

\subsection{Preliminary results}
\label{sec_Prem_results}
\begin{proposition}(Noise budget)
\label{prop:prop1}
Under Assumption~1 and the system model in~\eqref{eq:Y2_ezofl}, there exists a $B^2_\varepsilon$  such that $\hat{\varepsilon}_k$ defined in Eq.~\eqref{eq:varepshat} satisfies
\begin{equation}\label{eq:noise_budget_prop}
\E[\hat{\varepsilon}_k] = 0, \quad \E[\|\hat{\varepsilon}_k\|^2 \mid \calH_k] \leq B_\varepsilon^2,
\end{equation}
where the expression of $B_\varepsilon^2$ is given in Appendix~B. The decryption correctness holds with probability greater than $1 - 4B_\varepsilon^2/q^2$. The per-coefficient decoded noise variance is:
\begin{equation}\label{eq:sigma_eps_main}
\bar{\sigma}_\varepsilon^2 := \frac{B_\varepsilon^2}{n\Lambda^2}
= O\!\left(\frac{n\,\sigma_e^2\big(\sum_{i=1}^{N}\Omega_i\big)^2}{\Lambda^2}\right).
\end{equation}
\end{proposition}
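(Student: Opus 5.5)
The plan is to work directly from the decomposition $\hat{\varepsilon}_k = V_k\tilde{E}_{\mathrm{agg},k} + \zeta_k$ in Eq.~\eqref{eq:varepshat}, with $\zeta_k$ given by Eq.~\eqref{eq:zeta}. All products are taken in the negacyclic ring $R_q$ and treated coefficientwise over the integers. This is legitimate as long as no wrap-around occurs, and that condition is exactly the correctness event handled at the end.

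\textbf{Zero mean.} Each summand of $\hat{\varepsilon}_k$ contains at least one factor from the fresh randomness $v_{i,k}$, $e_{0,i,k}$, $e_{1,i,k}$, $e^*_{j,k}$, $w_{\cdot,k}$. By Assumption~1 these factors are zero mean and independent of $h_{\cdot,k}$, of $s_i$, and of the other factors in the same product. Hence $\E[V_k\tilde{E}_{\mathrm{agg},k}] = \sum_{i,l}\E[h_{i,k}h_{l,k}]\,\E[v_{i,k}]\,\E[e_l]=0$. The same factorization applies to the terms of $\zeta_k$; for the $\tilde S_k$ terms one also uses that $s_i$ is independent of the noise. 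This gives $\E[\hat{\varepsilon}_k]=0$.

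\textbf{Second moment.} Condition on $\calH_k$. The noises are independent of the history, so conditioning changes nothing. Expand $\E\|\hat\varepsilon_k\|^2$. Cross terms between distinct summands vanish, because each product contains an independent zero-mean factor that appears in only one of the two summands. The bound therefore reduces to a sum of second moments of individual terms.

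For a product of two independent random polynomials $x,y\in\mathbb{Z}[X]/(X^n+1)$ with independent zero-mean coefficients, each coefficient of $xy$ is a signed sum of $n$ products $x_ry_t$. Its variance is therefore $n\sigma_x^2\sigma_y^2$, and the squared norm has expectation $n^2\sigma_x^2\sigma_y^2$.

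Applied to $V_k\tilde E_{\mathrm{agg},k}$, this gives the following. The variance of $V_k$ is $\sigma_v^2\sum_i\Omega_i$. The variance of $\tilde E_{\mathrm{agg},k}$ is $\sigma_e^2\sum_i\Omega_i+\sigma_w^2$. The result is the dominant term $n^2\sigma_v^2(\sum_i\Omega_i)(\sigma_e^2\sum_i\Omega_i+\sigma_w^2)$. Correlation through the shared $h_{i,k}$ is handled by conditioning on $h$ first and then using $\E[h_{i,k}^2]=\Omega_i$ together with Cauchy–Schwarz.

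The $\tilde S_k$ terms are treated the same way, using $\E[\tilde S_k^2\text{ coefficient}] \le \sum_i\Omega_i$ since $s_i\in\{\pm1\}^n$. This yields terms like $n^2(\sum_i\Omega_i)(\sigma_e^2\sum_i\Omega_i+\sigma_w^2)$. The remaining linear terms contribute $n(\sigma_e^2\sum_i\Omega_i+\sigma_w^2)$ each. Summing these defines $B_\varepsilon^2$. Dividing by $n\Lambda^2$ and absorbing constants (with $\sigma_v$ of order one and $\sigma_w\lesssim\sigma_e$) gives the stated order $n\sigma_e^2(\sum_i\Omega_i)^2/\Lambda^2$ for $\bar\sigma_\varepsilon^2$.

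\textbf{Correctness probability.} Decryption is correct as long as $\|\hat{\varepsilon}_k\|<q/2$, since then no modular wrap-around corrupts the centered representative. Chebyshev/Markov gives $\Pr(\|\hat\varepsilon_k\|\ge q/2)\le \E\|\hat\varepsilon_k\|^2/(q/2)^2\le 4B_\varepsilon^2/q^2$.

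\textbf{Main obstacle.} The delicate step is the second-moment computation for the products involving $\tilde S_k$ and $V_k$. The channel coefficients $h_{i,k}$ appear in both factors, so the factors are not independent. There is also the subtlety of treating reduced-mod-$q$ quantities as integers. I would handle the first issue by conditioning on $\{h_{i,k}\}$ and $\{s_i\}$ and computing the conditional variance, then averaging, and the second by working on the no-wrap event that the correctness bound controls.
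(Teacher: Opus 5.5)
Your overall route is the paper's (Appendix~B): zero mean by pulling out a fresh zero-mean factor, reduction of $\E\|\hat\varepsilon_k\|^2$ to the sum of the seven individual second moments, the ring product rule applied after conditioning on $h_{\cdot,k}$, and Markov's inequality on $\|\hat\varepsilon_k\|^2$ for the correctness probability. Your reason for dropping cross terms is the right one: each summand has its own fresh zero-mean factor. The paper's ``triangular inequality'' $\E\|\sum_\ell X_\ell\|^2\le\sum_\ell\E\|X_\ell\|^2$ holds only because of this orthogonality.

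The gap is in the two products whose factors share the channel, $V_k\tilde E_{\mathrm{agg},k}$ and $\tilde S_k\sum_i h_{i,k}e_{1,i,k}$. If you carry out the conditioning you propose, you get exactly
\[
\E\bigl\|V_k\tilde E_{\mathrm{agg},k}\bigr\|^2=n^2\sigma_v^2\Bigl(\sigma_e^2\,\E\Bigl[\Bigl(\sum_i h_{i,k}^2\Bigr)^2\Bigr]+\sigma_w^2S_\Omega\Bigr),
\]
and $n^2\sigma_e^2\,\E[(\sum_i h_{i,k}^2)^2]$ for the $\tilde S_k$ term, where $S_\Omega=\sum_i\Omega_i$. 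This is a fourth moment of the channel. It cannot be controlled from $\E[h_{i,k}^2]=\Omega_i$ by Cauchy--Schwarz or by anything else. By Jensen, $\E[(\sum_i h_{i,k}^2)^2]\ge S_\Omega^2$. So the ``dominant term'' $n^2\sigma_v^2S_\Omega(\sigma_e^2S_\Omega+\sigma_w^2)$ you wrote, and the $\sigma_e^2$ part of its analogue for the $\tilde S_k$ terms, are \emph{lower} bounds, not upper bounds. Indeed, suppose $h_{i,k}$ had finite variance but infinite fourth moment. Orthogonality would then give $\E\|\hat\varepsilon_k\|^2\ge\E\|V_k\tilde E_{\mathrm{agg},k}\|^2=\infty$. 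So no finite $B_\varepsilon^2$ exists under the two-moment channel model \eqref{eq:channel_stats} alone.

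The missing ingredient is a distributional assumption on the channel. The paper supplies it by taking $h_{i,k}$ Gaussian, so that $\E[h_{i,k}^4]=3\Omega_i^2-2\mu_i^4$. By independence across devices, $\E[(\sum_i h_{i,k}^2)^2]=S_\Omega^2+2S_{\Omega^2}-2S_{\mu^4}$ (Eq.~\eqref{eq:ch5}). Substitute this for $S_\Omega^2$ in your two correlated terms and you recover the paper's $B_\varepsilon^2$, with its factor $\tfrac23$ playing the role of $\sigma_v^2$. Since $S_{\Omega^2}\le S_\Omega^2$, the correction is at most $2S_\Omega^2$. The claimed order $n\sigma_e^2S_\Omega^2/\Lambda^2$ for $\bar\sigma_\varepsilon^2$ therefore survives, but only once this fourth-moment step is added.
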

\begin{proof}
See Appendix~B.
\end{proof}
The noise budget depends on the channel statistics only through the aggregate sums $\sum_i \Omega_i$, $\sum_i \Omega_i^2$, and $\sum_i \mu_i^4$; For example, in the case of identical channels, $\Omega_i = \Omega$ for all $i$ and Eq.~\eqref{eq:sigma_eps_main} reduces to $\bar{\sigma}_\varepsilon^2 \approx \{6\times10^{-17}\,\Omega^2, 3\times10^{-17}\,\Omega^2\}$  for the parameters of Table~\ref{tab:he_params} corresponding to $n = \{4096,8192\}$ respectively. In both cases, the decoded noise variance is more than sixteen orders of magnitude below typical channel noise, so the encryption layer is invisible to the learning algorithm.

\begin{lemma} (Bias and second moment) 
\label{lem:lem1}
Under Assumptions~3.1--3.4 and Assumption~1, 
\begin{eqnarray*}
\E[{\bm g}_k \mid \calH_k] &=& c_1\gamma_k(\nabla F({\bm \theta}_k) + {\bm \delta}_k)\nonumber\\
\E[\|{\bm g}_k\|^2 \mid \calH_k] &\leq& \tilde{C}'_\gamma\,\gamma_k^2 + \tilde{C}'_\varepsilon, \label{eq:second_moment_main}
\end{eqnarray*}
with $\|{\bm \delta}_k\| \leq c_3\gamma_k$, and where $c_1 := 2b_1$, $c_3 := b\,b_2^3 N/(2b_1)$, and
\begin{equation*}
\tilde{C}'_\gamma = 4L^2b_2^4\Big(\sum_{i=1}^N \frac{\Omega_i}{\mu_i^2} + N(N\!-\!1)\Big), \quad 
\tilde{C}'_\varepsilon = b_2^2\,\bar{\sigma}_\varepsilon^2. 
\end{equation*}
\end{lemma}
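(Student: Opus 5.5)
The plan is to decompose ${\bm g}_k = {\bm \Phi}_k\sum_i \frac{h_{i,k}}{\mu_i}\Delta f_{i,k} + {\bm \Phi}_k\varepsilon_k$ and treat the two pieces separately. The first piece is exactly the unencrypted EZOFL estimator of~\cite{ezofl} without additive channel noise. The second is a new perturbation whose conditional mean vanishes and whose second moment is controlled by Proposition~\ref{prop:prop1}. Throughout, conditioning is on $\calH_k$, and the fresh randomness ${\bm \Phi}_k$, $h_{\cdot,k}$, $\xi_{\cdot,k}$, together with the HE/channel noises, is averaged out. By Assumption~1 these groups are mutually independent.

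\textbf{Bias.} Write $\E[{\bm \Phi}_k\varepsilon_k\mid\calH_k]=\E[{\bm \Phi}_k]\,\E[\varepsilon_k]$, using that the HE randomness is independent of ${\bm \Phi}_k$. Then $\varepsilon_k=\hat\varepsilon_k/\Lambda$ and $\E[\hat\varepsilon_k]=0$ by Proposition~\ref{prop:prop1}. Here $\varepsilon_k$ is understood as the coefficient actually decoded into the scalar $M_k$. For the remaining term, use the independence of $h_{i,k}$ from $({\bm \Phi}_k,\xi_{i,k})$ to get $\E[h_{i,k}/\mu_i]=1$. This reduces it to $\sum_i\E[{\bm \Phi}_k\Delta f_{i,k}\mid\calH_k]$. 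Next Taylor-expand $f_i({\bm \theta}_k\pm\gamma_k{\bm \Phi}_k,\xi)$ to second order, so that
\[
\Delta f_{i,k}=2\gamma_k{\bm \Phi}_k^\top\nabla f_i+O(\gamma_k^2).
\]
More precisely, the exact second-order remainder (with the Hessian bounded by $b$) gives a third-order residual in ${\bm \Phi}_k$. Take the expectation over $\xi_{i,k}$ to pass to $\nabla F_i$. Then $\E[{\bm \Phi}_k{\bm \Phi}_k^\top]=b_1 I$ yields $2b_1\gamma_k\nabla F({\bm \theta}_k)$, i.e. $c_1=2b_1$. The remainder is at most $\sum_i \tfrac{b}{2}\gamma_k^2\|{\bm \Phi}_k\|^3\le N b\, b_2^3\gamma_k^2/2$. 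Dividing by $c_1\gamma_k$ gives $\|{\bm \delta}_k\|\le c_3\gamma_k$ with $c_3=b b_2^3N/(2b_1)$. This step is exactly the argument of~\cite{ezofl}, and I would cite it.

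\textbf{Second moment.} Expand $\|{\bm g}_k\|^2=\|{\bm \Phi}_k\|^2 M_k^2\le b_2^2 M_k^2$. The cross term $2\E[\varepsilon_k\sum_i(h_{i,k}/\mu_i)\Delta f_{i,k}\|{\bm \Phi}_k\|^2]$ vanishes by independence and zero mean of $\varepsilon_k$. The noise term is bounded by $b_2^2\E[\varepsilon_k^2]\le b_2^2\bar\sigma_\varepsilon^2$, since the per-coefficient variance of $\varepsilon_k$ is $B_\varepsilon^2/(n\Lambda^2)$ by Proposition~\ref{prop:prop1}. This gives $\tilde C'_\varepsilon$.

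For the signal term, Lipschitz continuity gives $|\Delta f_{i,k}|\le 2L\gamma_k\|{\bm \Phi}_k\|\le 2L\gamma_k b_2$. Expand the square $\big(\sum_i (h_{i,k}/\mu_i)\Delta f_{i,k}\big)^2$. The diagonal terms satisfy $\E[h_{i,k}^2]/\mu_i^2=\Omega_i/\mu_i^2$. The off-diagonal terms factor by independence of channels across devices into $\E[h_i]\E[h_j]/(\mu_i\mu_j)=1$. Together they give $4L^2b_2^2\gamma_k^2\big(\sum_i\Omega_i/\mu_i^2+N(N-1)\big)$, and multiplying by $b_2^2$ yields $\tilde C'_\gamma$.

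\textbf{Main obstacle.} The delicate point is justifying that the channel and data factors separate in the off-diagonal terms. The factor $\Delta f_{i,k}\Delta f_{j,k}$ is not independent of ${\bm \Phi}_k$, but the channels are independent of both it and ${\bm \Phi}_k$. So I would condition on $({\bm \Phi}_k,\xi_k)$ first, and only then apply the deterministic Lipschitz bound. A second subtlety is passing from the ring-valued $\hat\varepsilon_k$ to the scalar $\varepsilon_k$ in $M_k$. Here I would simply take the coefficient that carries the encoded scalar and use the per-coefficient variance definition in~\eqref{eq:sigma_eps_main}, which is how $\bar\sigma_\varepsilon^2$ was normalized.
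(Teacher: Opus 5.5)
Your proposal is correct and follows the paper's proof in Appendix~C essentially step for step. For the bias, $\E[h_{i,k}]/\mu_i=1$ reduces the signal part to $\sum_i\E[{\bm \Phi}_k\Delta f_{i,k}\mid\calH_k]$, and the two-point Taylor argument of~\cite{ezofl} finishes it. For the second moment, you pull out $\|{\bm \Phi}_k\|^2\le b_2^2$, drop the cross term, bound the noise by $\bar{\sigma}_\varepsilon^2$, and use $|\Delta f_{i,k}|\le 2Lb_2\gamma_k$ together with $\E[h_{i,k}^2]=\Omega_i$ and $\E[h_{i,k}h_{j,k}]=\mu_i\mu_j$.

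Two small corrections. First, the per-device Taylor remainder $\tfrac{\gamma_k^2}{2}{\bm \Phi}_k{\bm \Phi}_k^\top\bigl(\nabla^2F_i(\tilde{\bm \theta}_+)-\nabla^2F_i(\tilde{\bm \theta}_-)\bigr){\bm \Phi}_k$, with intermediate points $\tilde{\bm \theta}_\pm$, is bounded by $b\,b_2^3\gamma_k^2$, not $\tfrac{b}{2}b_2^3\gamma_k^2$. It is this corrected bound that, summed over the $N$ devices and divided by $c_1\gamma_k$, gives exactly $c_3\gamma_k$; your stated remainder would give $c_3\gamma_k/2$, so your arithmetic lands on the right constant only by accident. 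Second, $\varepsilon_k$ depends on $h_{\cdot,k}$ through $V_k$, $\tilde{S}_k$ and $\tilde{E}_{\mathrm{agg},k}$, so it is not independent of the channels. The cross term vanishes because $\E[\varepsilon_k\mid h_{\cdot,k},{\bm \Phi}_k,\xi_{\cdot,k}]=0$, which is the term-by-term conditional computation of Appendix~B, not because of outright independence.
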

\begin{proof}
See Appendix~C.
\end{proof}



\subsection{Asymptotic Convergence Rate}
Next, we present our main convergence results. 
\begin{theorem}
\label{thm:th1}
Under Assumptions~3.1--3.4 and Assumption~1, if $\sum_k \eta_k^2 < \infty$, then $\lim_{k\to\infty} \E[\|\nabla F({\bm \theta}_k)\|^2] = 0$. 
\end{theorem}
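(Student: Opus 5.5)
The plan is to follow the standard stochastic-approximation argument for zero-order methods, as in~\cite{ezofl}, feeding in Lemma~\ref{lem:lem1} for the bias and second moment of ${\bm g}_k$. The argument has two stages: first show that $\E[\|\nabla F({\bm \theta}_k)\|^2]$ is small along a subsequence, then show it cannot oscillate.

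\textbf{Descent inequality.} By $L$-smoothness of $F$ and the update ${\bm \theta}_{k+1}={\bm \theta}_k-\eta_k{\bm g}_k$,
\begin{equation*}
F({\bm \theta}_{k+1})\le F({\bm \theta}_k)-\eta_k\nabla F({\bm \theta}_k)^\top{\bm g}_k+\tfrac{L}{2}\eta_k^2\|{\bm g}_k\|^2 .
\end{equation*}
Conditioning on $\calH_k$ and inserting Lemma~\ref{lem:lem1} gives
\begin{equation*}
\E[F({\bm \theta}_{k+1})\mid\calH_k]\le F({\bm \theta}_k)-c_1\eta_k\gamma_k\|\nabla F({\bm \theta}_k)\|^2+c_1c_3\eta_k\gamma_k^2\|\nabla F({\bm \theta}_k)\|+\tfrac{L}{2}\eta_k^2(\tilde C'_\gamma\gamma_k^2+\tilde C'_\varepsilon).
\end{equation*}
The bias term is handled with $\|\nabla F\|\le \tfrac12(1+\|\nabla F\|^2)$, or with a uniform bound $\|\nabla F\|\le NL$ coming from the Lipschitz continuity in Assumption~3.2. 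Either way it is dominated by $\eta_k\gamma_k^3$ up to a lower-order term. Taking total expectations and summing over $k$, and assuming $F$ is bounded below, yields
\begin{equation*}
\sum_k\eta_k\gamma_k\E\|\nabla F({\bm \theta}_k)\|^2<\infty .
\end{equation*}
This uses $\sum\eta_k\gamma_k^3<\infty$, $\sum\eta_k^2\gamma_k^2<\infty$, and the extra hypothesis $\sum\eta_k^2<\infty$. The last one is exactly what absorbs the encryption noise floor $\tilde C'_\varepsilon$, which does not decay with $\gamma_k$. Since $\sum\eta_k\gamma_k=\infty$, it follows that $\liminf_k\E\|\nabla F({\bm \theta}_k)\|^2=0$.

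\textbf{From liminf to limit.} To upgrade to a full limit I will use the standard lemma (e.g.\ Bertsekas--Tsitsiklis) for a nonnegative sequence $u_k=\E\|\nabla F({\bm \theta}_k)\|^2$ with weights $\alpha_k=\eta_k\gamma_k$. If $\sum\alpha_k u_k<\infty$, $\sum\alpha_k=\infty$, and $|u_{k+1}-u_k|\le C\alpha_k$ (or $\le C\eta_k$ with square-summable $\eta_k$), then $u_k\to0$. The increment bound comes from the Hessian bound $\|\nabla^2F\|\le Nb$:
\begin{equation*}
\bigl|\|\nabla F({\bm \theta}_{k+1})\|^2-\|\nabla F({\bm \theta}_k)\|^2\bigr|\le Nb\,\eta_k\|{\bm g}_k\|\bigl(2\|\nabla F({\bm \theta}_k)\|+Nb\,\eta_k\|{\bm g}_k\|\bigr).
\end{equation*}
Taking expectations and using Lemma~\ref{lem:lem1} gives $|u_{k+1}-u_k|\le C(\eta_k+\eta_k^2)$.

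\textbf{Main obstacle.} The hard step is this oscillation argument. The increment is of order $\eta_k$, not of the weight $\eta_k\gamma_k$, because the noise floor $\tilde C'_\varepsilon$ keeps $\E\|{\bm g}_k\|$ bounded away from zero. I would therefore run the upcrossing argument directly. Suppose $u_k>2\epsilon$ infinitely often while $u_k<\epsilon$ infinitely often. On each upcrossing interval $[k_1,k_2)$, summability of $\sum\alpha_k u_k$ forces $\sum_{k_1}^{k_2}\alpha_k\to0$. If the $\gamma_k$ are slowly varying, this in turn makes $\sum_{k_1}^{k_2}\eta_k\to0$, which contradicts the accumulated change of at least $\epsilon$. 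If that step-size regularity is not available, the fallback is to argue almost surely: apply Robbins--Siegmund to $F({\bm \theta}_k)$ and the martingale convergence theorem to $\sum\eta_k({\bm g}_k-\E[{\bm g}_k\mid\calH_k])$, which converges because $\sum\eta_k^2<\infty$. Then pass to expectation by dominated convergence using $\|\nabla F\|\le NL$.
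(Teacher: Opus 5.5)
Your first stage has a concrete error in the bias term. Both options you offer bound $c_1c_3\eta_k\gamma_k^2\|\nabla F({\bm \theta}_k)\|$ by a constant times $\eta_k\gamma_k^2$. That quantity is larger than $\eta_k\gamma_k^3$, not dominated by it, and $\sum_k\eta_k\gamma_k^2$ can diverge under the hypotheses. For example, $\eta_k=k^{-0.6}$ and $\gamma_k=k^{-0.15}$ satisfy $\sum\eta_k\gamma_k=\infty$, $\sum\eta_k\gamma_k^3<\infty$, $\sum\eta_k^2\gamma_k^2<\infty$ and $\sum\eta_k^2<\infty$, yet $\eta_k\gamma_k^2=k^{-0.9}$. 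The fix is the split used in \eqref{eq:descent}: $c_3\gamma_k\|\nabla F\|\le\frac12\|\nabla F\|^2+\frac12 c_3^2\gamma_k^2$. This halves the descent coefficient and leaves $\frac{c_1c_3^2}{2}\eta_k\gamma_k^3$, so you recover exactly \eqref{eq:tele}. That is where the paper's own argument stops; it then defers the limit step to \cite[Appendix~A-C]{ezofl}.

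Your primary liminf-to-limit argument does not close, and the problem is not step-size regularity. The hypotheses force $\liminf_k\gamma_k=0$: if $\gamma_k\ge\gamma>0$ eventually, then $\sum\eta_k\gamma_k^3\ge\gamma^2\sum\eta_k\gamma_k=\infty$. So for monotone $\gamma_k$, the fact $\sum_{k_1}^{k_2}\eta_k\gamma_k\to0$ only gives $\sum_{k_1}^{k_2}\eta_k\le o(1)/\gamma_{k_2}$, which is compatible with $\sum_{k_1}^{k_2}\eta_k\ge\epsilon/C$. For the same reason, the variant of the lemma with increments $|u_{k+1}-u_k|\le C\eta_k$ and $\sum\eta_k^2<\infty$ is false. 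With the step sizes above, take $u_k=0$ except for bumps of height $\epsilon$ near $k=2^j$ that rise and fall with slope $C\eta_k$. Each bump contributes $O(2^{-0.15j})$ to $\sum\alpha_k u_k$, so the series converges while $\limsup u_k=\epsilon$. A deterministic increment bound on $u_k=\E\|\nabla F({\bm \theta}_k)\|^2$ treats the zero-mean noise floor $\tilde C'_\varepsilon$ as adversarial. Improving it to $O(\eta_k\gamma_k+\eta_k^2)$ would need something like a Lipschitz Hessian, which Assumption~3.1 does not provide.

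Your fallback is the argument that actually works, and it should be the main line.
\begin{itemize}
\item Robbins--Siegmund applied to $F({\bm \theta}_k)-F({\bm \theta}^*)$, with the corrected descent inequality, gives $\sum\eta_k\gamma_k\|\nabla F({\bm \theta}_k)\|^2<\infty$ almost surely.
\item Write $\eta_k{\bm g}_k$ as the drift $c_1\eta_k\gamma_k(\nabla F({\bm \theta}_k)+{\bm \delta}_k)$ plus a martingale increment with respect to $\calH_{k+1}$. The drift is of the order of the weights $\alpha_k$. The martingale series converges almost surely because $\sum\eta_k^2\E\|{\bm g}_k\|^2\le\tilde C'_\gamma\sum\eta_k^2\gamma_k^2+\tilde C'_\varepsilon\sum\eta_k^2<\infty$.
\item With this decomposition, the Bertsekas--Tsitsiklis upcrossing argument runs pathwise.
\item Finally, $\|\nabla F\|\le NL$ lets dominated convergence give $\E\|\nabla F({\bm \theta}_k)\|^2\to0$.
\end{itemize}
Compared with the paper, which simply cites \cite{ezofl} for this step, this route is self-contained. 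It also makes explicit why $\sum\eta_k^2<\infty$ is needed: the condition controls the martingale part generated by the non-vanishing encryption noise, not only the last sum in \eqref{eq:tele}.
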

\begin{proof}
    See Appendix~D.
\end{proof}

\begin{theorem}
\label{thm:th2}
     Let $\eta_k = \eta_0 K^{-1/4}$, $\gamma_k = \gamma_0 K^{-1/4}$, where $\eta_0, \gamma_0 > 0$. Under Assumptions~3.1--3.4 and Assumption~1, after $K$ iterations:
\begin{equation*}\label{eq:rate_main}
\min_{k=1:K} \E[\|\nabla F({\bm \theta}_k)\|^2] \leq \frac{R}{\sqrt{K}} + \rho,
\end{equation*}
where $R := \frac{2\hat{\Delta}}{c_1\eta_0\gamma_0} + c_3^2\gamma_0^2 + \frac{\tilde{C}'_\gamma L\eta_0\gamma_0}{c_1}$, $\hat{\Delta} := F({\bm \theta}_0) - F({\bm \theta}^*)$ such that $F({\bm \theta}^*)=\min_{{\bm \theta}}F({\bm \theta})>-\infty$, and 
\begin{equation}\label{eq:rho_main}
\rho := \frac{\tilde{C}'_\varepsilon L\eta_0}{c_1\gamma_0} = \frac{Lb_2^2\bar{\sigma}_\varepsilon^2\eta_0}{2b_1\gamma_0}
\end{equation}
is the noise floor introduced by the encryption. Furthermore, for any $\epsilon > \rho$ and $0 < \beta < 1$, if $K = R^2/(\epsilon\beta - \rho)^2$, then $\Prob\bigl(\min_{k=1:K}\|\nabla F({\bm \theta}_k)\|^2 < \epsilon\bigr) \geq 1 - \beta$.
\end{theorem}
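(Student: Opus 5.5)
The plan is the standard descent argument for $L$-smooth nonconvex objectives, fed by Lemma~\ref{lem:lem1}. First, $L$-smoothness of $F$ and the update ${\bm \theta}_{k+1}={\bm \theta}_k-\eta_k{\bm g}_k$ give
\begin{equation*}
F({\bm \theta}_{k+1}) \leq F({\bm \theta}_k) - \eta_k\langle \nabla F({\bm \theta}_k),{\bm g}_k\rangle + \tfrac{L}{2}\eta_k^2\|{\bm g}_k\|^2 .
\end{equation*}
Next, take the conditional expectation given $\calH_k$ and substitute the bias identity and second-moment bound of Lemma~\ref{lem:lem1}. This yields $-c_1\eta_k\gamma_k\|\nabla F({\bm \theta}_k)\|^2 - c_1\eta_k\gamma_k\langle\nabla F({\bm \theta}_k),{\bm \delta}_k\rangle + \tfrac{L}{2}\eta_k^2(\tilde C'_\gamma\gamma_k^2+\tilde C'_\varepsilon)$.

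The cross term is handled with Young's inequality, $-\langle \nabla F,{\bm \delta}_k\rangle \le \tfrac12\|\nabla F\|^2+\tfrac12\|{\bm \delta}_k\|^2$, together with $\|{\bm \delta}_k\|\le c_3\gamma_k$. This leaves the per-step inequality
\begin{equation*}
\tfrac{c_1}{2}\eta_k\gamma_k\E\|\nabla F({\bm \theta}_k)\|^2 \le \E F({\bm \theta}_k)-\E F({\bm \theta}_{k+1}) + \tfrac{c_1c_3^2}{2}\eta_k\gamma_k^3 + \tfrac{L}{2}\tilde C'_\gamma\eta_k^2\gamma_k^2+\tfrac{L}{2}\tilde C'_\varepsilon\eta_k^2 .
\end{equation*}
Summing over $k=1,\dots,K$ telescopes the function values, and the sum is bounded by $\hat\Delta$ using $F({\bm \theta}^*)>-\infty$. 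The minimum is at most the average, so dividing by $\tfrac{c_1}{2}K\eta\gamma$ with $\eta\gamma=\eta_0\gamma_0K^{-1/2}$ gives
\begin{equation*}
\min_k\E\|\nabla F\|^2 \le \frac{2\hat\Delta}{c_1\eta_0\gamma_0\sqrt K} + c_3^2\gamma_0^2K^{-1/2} + \frac{L\tilde C'_\gamma\eta_0\gamma_0}{c_1}K^{-1/2} + \frac{L\tilde C'_\varepsilon\eta_0}{c_1\gamma_0}.
\end{equation*}
Each term follows from the step sizes: $\gamma^2=\gamma_0^2K^{-1/2}$, $\eta\gamma=\eta_0\gamma_0K^{-1/2}$, and $\eta/\gamma=\eta_0/\gamma_0$. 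These reproduce $R/\sqrt K$ and $\rho$ exactly, with $\tilde C'_\varepsilon/c_1 = b_2^2\bar\sigma_\varepsilon^2/(2b_1)$.

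For the high-probability statement, apply Markov's inequality to the random variable $\|\nabla F({\bm \theta}_{k^\star})\|^2$, where $k^\star$ attains the minimum of the expectations. This gives $\Prob(\|\nabla F({\bm \theta}_{k^\star})\|^2\ge\epsilon)\le (R/\sqrt K+\rho)/\epsilon$. Since $\min_k\|\nabla F({\bm \theta}_k)\|^2\le\|\nabla F({\bm \theta}_{k^\star})\|^2$, the same bound holds for the event $\min_k\|\nabla F({\bm \theta}_k)\|^2\ge\epsilon$. Substituting $K=R^2/(\epsilon\beta-\rho)^2$ gives $R/\sqrt K=\epsilon\beta-\rho$, so the bound becomes $(\epsilon\beta-\rho+\rho)/\epsilon=\beta$. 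This step needs $\epsilon\beta>\rho$, which is implicit in the choice of $K$.

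The main obstacle is the bookkeeping of the constant $c_1$ in the bias term. Lemma~\ref{lem:lem1} gives bias $c_1\gamma_k(\nabla F+{\bm \delta}_k)$, and the Young split must use weight exactly $\tfrac12$ so that the $c_3^2\gamma_0^2$ term appears without extra factors. If the constants do not match the stated $R$, I would absorb the discrepancy into $R$, since the rate and the noise floor $\rho$ are unaffected. Everything else in the argument is routine.
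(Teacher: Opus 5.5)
Your proposal is correct and follows the paper's proof step for step: the descent inequality, Lemma~\ref{lem:lem1} with the $\tfrac12$-weighted Young split, telescoping under constant step sizes, min-below-average, and Markov. Your arithmetic reproduces $R$ and $\rho$ exactly.

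Two points in your version are more careful than the paper. You pass through $k^\star$ explicitly in the Markov step, and you note that $\epsilon\beta>\rho$ is needed, not just $\epsilon>\rho$. The one slip is that summing from $k=1$ telescopes to $\E F({\bm \theta}_1)-F({\bm \theta}^*)$ rather than $\hat\Delta$. Start the sum at $k=0$ instead, at the cost of a harmless $(K+1)/K$ factor on the error terms; the paper has the same $K$ versus $K+1$ issue and ignores it.
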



\begin{proof}
The global objective $F$ is $L$-smooth by Assumption~3.1 of~\cite{ezofl}, which gives the descent inequality
\begin{equation*}
F({\bm \theta}_{k+1}) \leq F({\bm \theta}_k) - \eta_k\langle\nabla F({\bm \theta}_k), {\bm g}_k\rangle + \tfrac{L}{2}\eta_k^2\|{\bm g}_k\|^2 .
\end{equation*}
Taking $\E[\cdot\mid\calH_k]$, applying Lemma~\ref{lem:lem1}, summing over $1\leq k \leq K$, and substituting $\eta_k=\eta_0K^{-1/4}$ and $\gamma_k=\gamma_0K^{-1/4}$ yields
\begin{equation*}
\min_{k=1:K}\E[\|\nabla F({\bm \theta}_k)\|^2] \leq \frac{R}{\sqrt{K}} + \rho,
\end{equation*}
where the $R/\sqrt{K}$ term originates from the initial gap, the bias, and the gradient's second moment, while the constant $\rho$ originates from the decoded HE noise $\tilde{C}'_\varepsilon$. Since $\min_{k=1:K}\|\nabla F({\bm \theta}_k)\|^2$ is nonnegative, Markov's inequality gives, for any $\epsilon>\rho$,
\begin{equation*}
\Prob\!\left(\min_{k=1:K}\|\nabla F({\bm \theta}_k)\|^2 \geq \epsilon\right) \leq \frac{1}{\epsilon}\!\left(\frac{R}{\sqrt{K}} + \rho\right),
\end{equation*}
so that, taking complements, $\Prob(\min_{k=1:K}\|\nabla F({\bm \theta}_k)\|^2 < \epsilon) \geq 1 - \frac{1}{\epsilon}(R/\sqrt{K}+\rho)$; setting $\frac{1}{\epsilon}(R/\sqrt{K}+\rho)=\beta$ gives $K = R^2/(\epsilon\beta-\rho)^2$. The detailed proof is given in Appendix~D.
\end{proof}


Theorem~\ref{thm:th2} implies that the algorithm converges to a neighborhood of size $\rho$ at rate $O(1/\sqrt{K})$, where $\rho$ is given in Eq.~\eqref{eq:rho_main}. For the specific setup used in our experiments, namely the HE parameters of Table~\ref{tab:he_params} ($\Lambda=2^{40}$, $\sigma_e=3.2$, $N=10$), identical channels with $\mu_i=\mu$ and $\Omega_i=\Omega$, and step sizes $\eta_0=\gamma_0=0.05$, the decoded noise variance is $\bar{\sigma}_\varepsilon^2 \approx 4 \times 10^{-16}\,\Omega^2$, giving a noise floor of order $\rho \approx 10^{-16}$. This is more than sixteen orders of magnitude below the channel noise variance $\sigma_n^2$, and is therefore negligible for all practical purposes. The number of iterations needed to reach $\min_{k=1:K}\|\nabla F({\bm \theta}_k)\|^2 < \epsilon$ is then $K \approx R^2/(\epsilon\beta)^2$, the same as for the non-encrypted algorithm.

\section{Numerical Results}
\label{sec:mumerical_result}

\begin{figure*}[!tbp]
\centering
\includegraphics[width=0.7\textwidth]{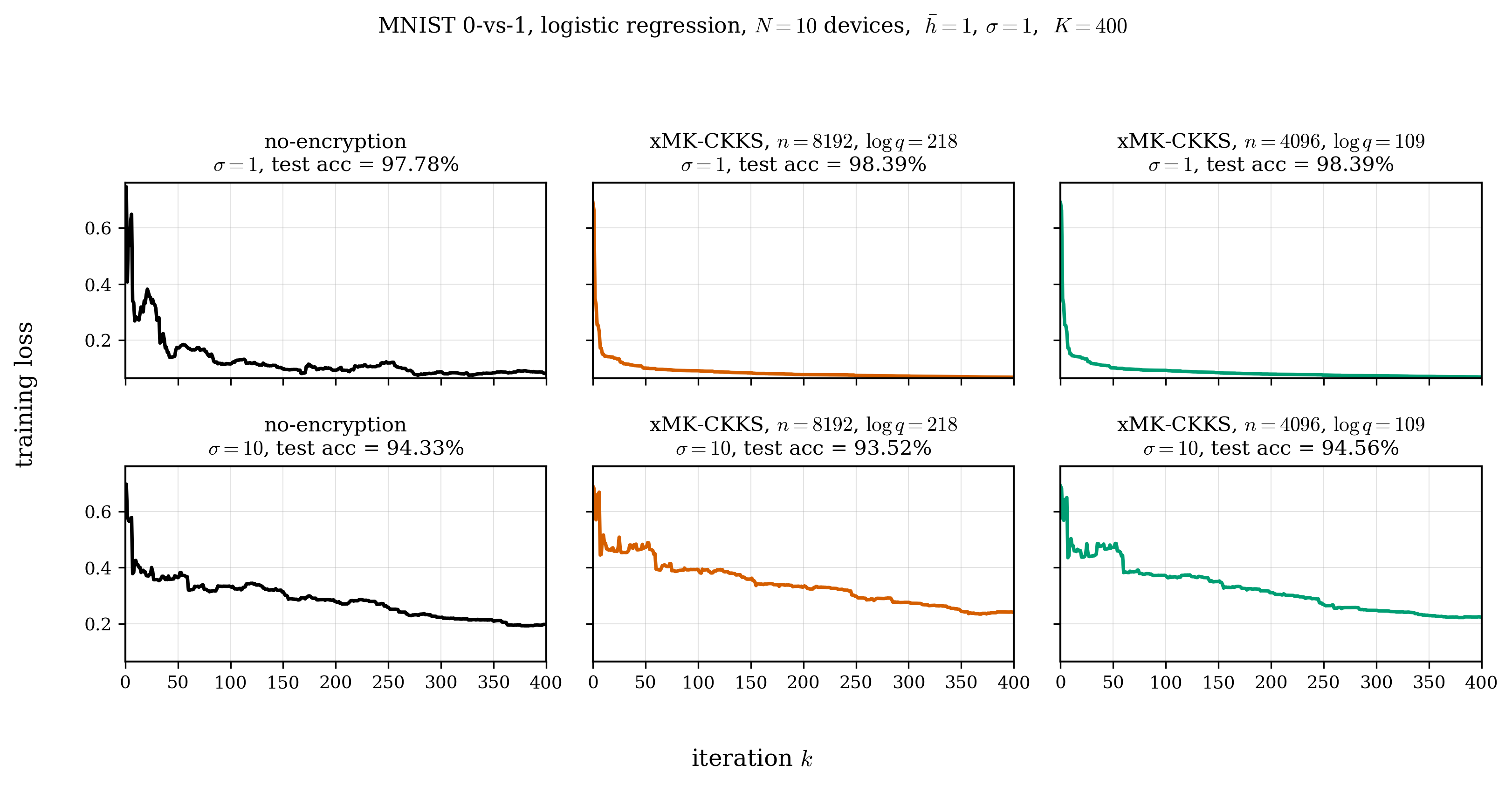}

\caption{Training loss on MNIST 0-vs-1 under the proposed protocol for $n \in \{4096, 8192\}$ and the non-encrypted baseline, in both channel regimes $\sigma \in \{1, 10\}$.}

\label{fig:setups}
\end{figure*}

\begin{figure*}[!tbp]
\centering
\includegraphics[width=0.7\textwidth]{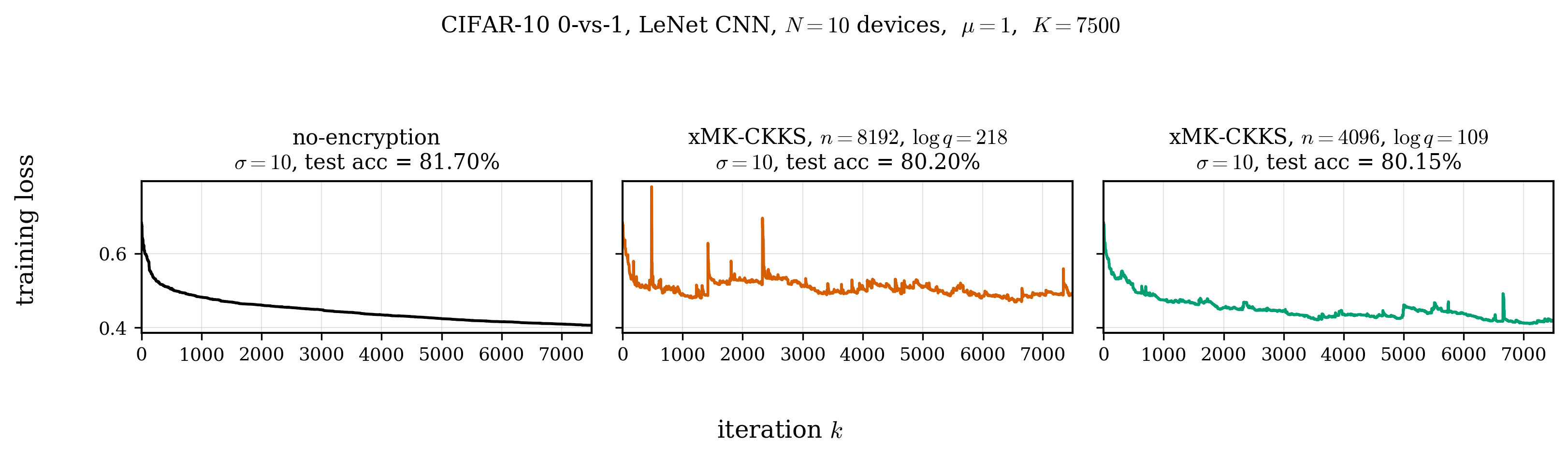}
\caption{{Training loss on CIFAR-10 0-vs-1 with a LeNet CNN ($d=61{,}241$) under the proposed protocol for $n \in \{4096, 8192\}$ and the non-encrypted baseline with $\sigma=10$}.}
\label{fig:cifar_setups}
\end{figure*}

\begin{figure}[!tbp]
\centering
\includegraphics[width=0.85\columnwidth]{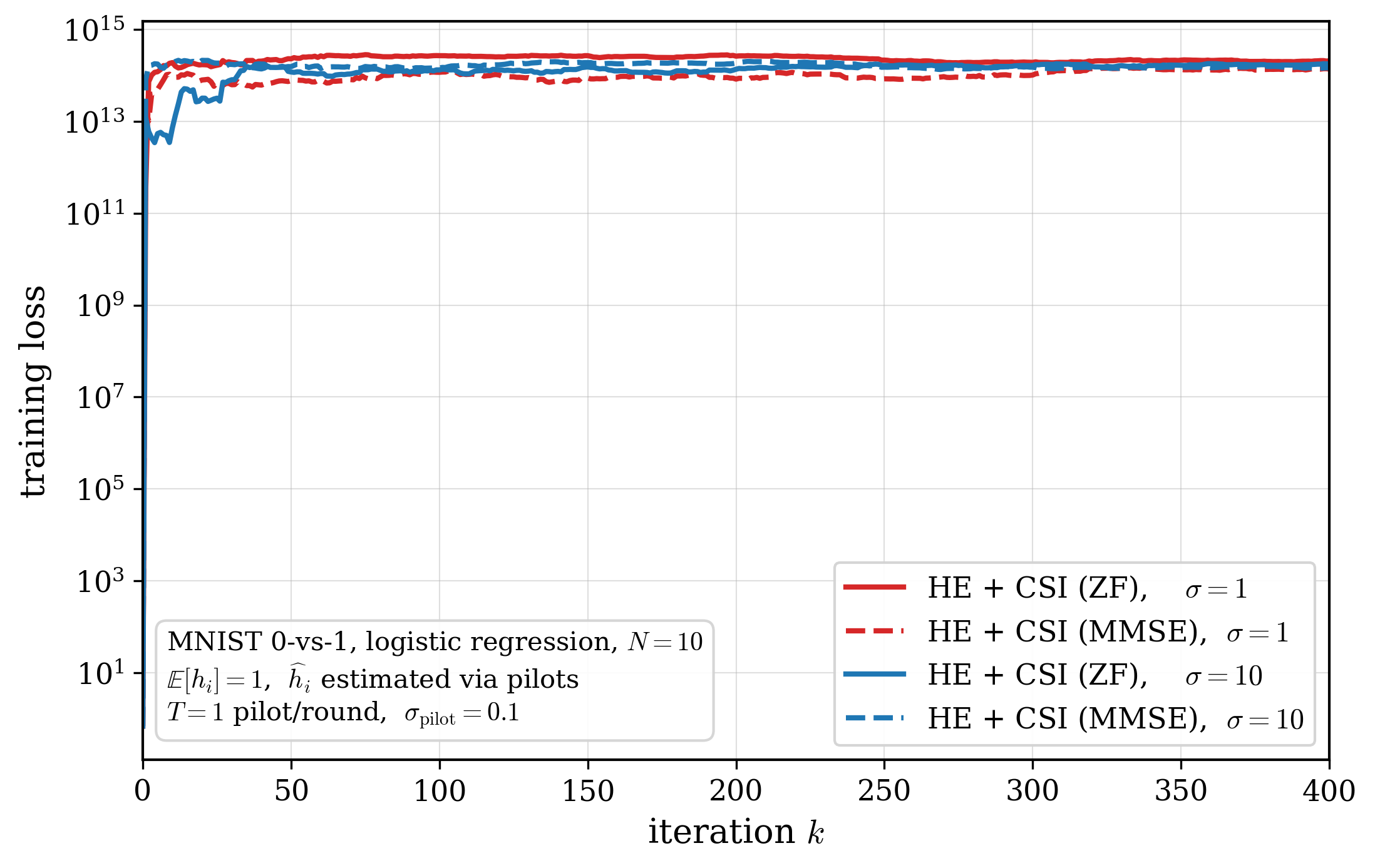}
\caption{Training loss under HE with CSI-based pre-equalization at $q \approx 2^{109}$ ($\sigma_{\mathrm{pilot}} = 0.1$). Both ZF and MMSE equalization do not converge for either channel variance, since the residual estimation error is multiplied by $q$ at decryption.}
\label{fig:eq_breaks}
\end{figure}

We evaluate the proposed protocol on MNIST 0-vs-1 binary classification with a logistic regression model ($d = 785$). The channel coefficient is $h_{i,k} \sim \mathcal{N}(\mu_i, \sigma_i^2)$, $1 \leq i \leq N$,  with $\mu_i = \mu = 1$ and $\sigma_i = \sigma \in \{1, 10\}$, giving $\Omega_i = \Omega = 1 + \sigma^2$; the channel noise standard deviation is $\sigma_n = 1$. Step sizes are $\eta_k = \eta_0(1+k)^{-0.50}$ and $\gamma_k = \gamma_0(1+k)^{-0.25}$, $1 \leq k \leq K$,  with $\eta_0 = \gamma_0 = 0.05$, batch size 128, $K = 400$ rounds, $N = 10$ devices.

Fig.~\ref{fig:setups} shows the training loss under the proposed protocol for the  two HE setups of Table~\ref{tab:he_params} and the non-encrypted baseline. Within each regime, the encrypted and non-encrypted configurations follow comparable trajectories and reach test accuracies that differ by less than $1\%$ (Table~\ref{tab:results}), confirming that the decoded HE noise  $\bar{\sigma}_\varepsilon^2 \approx 2\times 10^{-16}\,\Omega^2$  and $\bar{\sigma}_\varepsilon^2 \approx 4 \times 10^{-16}\,\Omega^2$ for the cases $n = 4096$ and $n = 8192$ is invisible to the learning algorithm.  Trajectory differences within a regime arise from independent channel realizations rather than from encryption.
We also compare to the CSI-based pre-equalization considered in Section~III-B: each device estimates $h_{i,k}$ from a pilot (with the standard deviation of the channel-estimation error $\sigma_{\mathrm{pilot}} = 0.1$) and pre-equalizes by ZF or MMSE. The residual estimation error is multiplied by $q$ at decryption, dominating the decoded gradient by a term of order $q/\Lambda \approx 2^{70}$. Fig.~\ref{fig:eq_breaks} confirms this: the training loss saturates near $10^{14}$ from the first iteration regardless of estimator or channel variance. Our proposed protocol avoids this by never estimating the channel.

\subsection*{Scaling to a Convolutional Model on CIFAR-10}
To assess the protocol on a more complex task, we repeat the 0-vs-1 binary classification experiment on CIFAR-10 with a LeNet-style CNN: two $5\times5$ convolutional layers with $6$ and $16$ output filters respectively, each followed by $2\times2$ max-pooling, and three fully connected layers ($400\!\to\!120\!\to\!84\!\to\!1$) with ReLU activations, trained on a single output logit with a binary cross-entropy loss. The model has $d = 61{,}241$ trainable parameters. The hidden layers use Kaiming initialization matched to the ReLU nonlinearities, and the output layer is initialized with small weights so that the initial logits remain close to zero. The channel and HE parameters are unchanged ($N=10$, $\mu_i = 1$, $\sigma=10$, $\sigma_n = 1$, in addition to the two parameter sets of Table~\ref{tab:he_params}); the step sizes are $\eta_k = 0.005(1+k)^{-0.50}$ and $\gamma_k = 0.01(1+k)^{-0.25}$, and training runs for $K = 7500$ rounds.

The smaller step sizes are dictated by the model dimension, not by the encryption. With Rademacher perturbation entries, $b_2 = \|{\bm \Phi}_k\| = \sqrt{d}$ grows from $28.0$ on MNIST to $247.5$ on CIFAR-10, and since by  Lemma~\ref{lem:lem1} $\tilde{C}'_\gamma \propto b_2^4 = d^2$, the dimension-dependent factor of the second-moment bound grows by $(61{,}241/785)^2 \approx 6.1\times10^{3}$; the initial step sizes are reduced accordingly. This dependence comes from the ZO estimator itself and affects the encrypted and non-encrypted variants identically.

{Fig.~\ref{fig:cifar_setups} shows the training loss for the two HE setups and the non-encrypted baseline. The test accuracy drops from the $93$--$94\%$ range to the $80$--$81\%$ range (Table~\ref{tab:results}). Nevertheless, within each channel regime the encrypted and non-encrypted configurations follow comparable trajectories and reach test accuracies that differ by at most $1.6\%$; as in Fig.~\ref{fig:setups}, trajectory differences arise from independent channel realizations rather than from encryption. Most importantly, the encrypted payload remains a single scalar per device per round: the ciphertext size, uplink load, and encryption time of Table~\ref{tab:he_params} are unchanged, confirming that the overhead of the proposed protocol is independent of the model dimension.}

\begin{table}[!htb]
\centering
\caption{{Test accuracy across the HE parameter setups and the non-encrypted baseline, on MNIST 0-vs-1 (logistic regression, $d=785$) and CIFAR-10 0-vs-1 (LeNet CNN, $d=61{,}241$).}}
\label{tab:results}
\footnotesize
{
\begin{tabular}{@{}lcccccc@{}}
\toprule
& & & \multicolumn{2}{c}{\textbf{MNIST}} & \multicolumn{2}{c@{}}{\textbf{CIFAR-10}} \\
\cmidrule(lr){4-5}\cmidrule(l){6-7}
Setup & $n$ & $\log_2 q$ & $\sigma\!=\!1$ & $\sigma\!=\!10$ & $\,\,\,\,\sigma\!=\!10$ \\
\midrule
A (HE) & 8192 & 218 & 98.39\% & 93.52\%  & \,\,\,\,80.20\% \\
B (HE) & 4096 & 109 & 98.30\% & 94.56\%  & \,\,\,\,80.15\% \\
no-enc & --- & --- & 97.78\% & 94.33\%  & \,\,\,\,81.70\% \\
\bottomrule
\end{tabular}
}
\end{table}

\section{Conclusion}

We showed that multi-key HE and over-the-air aggregation can coexist: the additive structure of xMK-CKKS matches the wireless superposition, and the encryption noise vanishes after decoding. Among $N$ users, each user's device incorporates the known channel mean $\{\mu_i\}_{1\leq i\leq N}$ into its encoding, and the gradient estimate is formed directly from the decrypted aggregate. The decoded HE noise introduces a noise floor 
that is negligible for all practical purposes. The conducted numerical experiments on MNIST and CIFAR-10 confirm this conclusion, and the ZO structure keeps the encrypted payload at a single scalar per device, so the encryption overhead does not grow with the model dimension~$d$.

Several directions remain open: reducing the key retransmission overhead by exploiting slow channel variation, extending the protocol to approximate block fading with controlled error, and combining HE with differential privacy where the channel noise serves as a privacy mechanism on top of the encryption layer.

\bibliographystyle{IEEEtran}
\bibliography{references}

\appendices

\section{Detailed Correctness Verification of xMK-CKKS}
\label{appendix:correctness_xmk_ckks}

This appendix provides a detailed derivation of the correctness of the collaborative decryption procedure in the xMK-CKKS scheme and analyzes the resulting bounded noise terms.

\subsection{Aggregated Public Key}

Each device $d_i$ generates a secret key $s_i$ and computes its partial public key:
\[
b_i=-s_i a+e_i \mod q,
\]
where:
\begin{itemize}
\item $a$ is a common public polynomial,
\item $e_i$ is a small RLWE error sampled from a Gaussian distribution.
\end{itemize}

The aggregated public key is:
\[
\tilde{b}=\sum_{i=1}^{N}b_i
=
-\sum_{i=1}^{N}s_i a
+
\sum_{i=1}^{N}e_i
\mod q.
\]

Thus:
\[
\tilde{b}+a\sum_{i=1}^{N}s_i
=
\sum_{i=1}^{N}e_i.
\]

\subsection{Encryption}

Each device encrypts plaintext $m_i$ as:
\[
ct_i=(c_{0,i},c_{1,i}),
\]
where:
\[
c_{0,i}=v_i\tilde{b}+m_i+e_i^{(0)},
\]
\[
c_{1,i}=v_i a+e_i^{(1)}.
\]

Here:
\begin{itemize}
\item $v_i$ is a random masking polynomial,
\item $e_i^{(0)}$ and $e_i^{(1)}$ are bounded Gaussian noise terms.
\end{itemize}

\subsection{Homomorphic Aggregation}

The server computes:
\[
C_{\mathrm{sum}}
=
\sum_{i=1}^{N}ct_i
=
(C_{\mathrm{sum},0},C_{\mathrm{sum},1}),
\]
with:
\[
C_{\mathrm{sum},0}
=
\sum_{i=1}^{N}
(v_i\tilde{b}+m_i+e_i^{(0)}),
\]
\[
C_{\mathrm{sum},1}
=
\sum_{i=1}^{N}
(v_i a+e_i^{(1)}).
\]

\subsection{Collaborative Decryption Shares}

Each device computes a partial decryption share:
\[
D_i=s_i C_{\mathrm{sum},1}+e_i^{*},
\]
where $e_i^{*}$ is an additional bounded masking noise.

Substituting $C_{\mathrm{sum},1}$:
\[
D_i
=
s_i
\sum_{j=1}^{N}
(v_j a+e_j^{(1)})
+
e_i^{*}.
\]

\subsection{Server-Side Reconstruction}

The server reconstructs:
\begin{align*}
&C_{\mathrm{sum},0}
+
\sum_{i=1}^{N}D_i\\
&=
\sum_{i=1}^{N}
(v_i\tilde{b}+m_i+e_i^{(0)})
+
\sum_{i=1}^{N}
s_i
\sum_{j=1}^{N}
(v_j a+e_j^{(1)})
+
\sum_{i=1}^{N}e_i^{*},
\end{align*}
where we substituted all terms to get the equality.

\subsection{Expansion of the Aggregated Public Key}

Using:
\[
\tilde{b}
=
-\sum_{j=1}^{N}s_j a
+
\sum_{j=1}^{N}e_j,
\]
we obtain:
\[
\sum_{i=1}^{N}v_i\tilde{b}
=
-
\sum_{i=1}^{N}\sum_{j=1}^{N}v_i s_j a
+
\sum_{i=1}^{N}\sum_{j=1}^{N}v_i e_j.
\]

Thus:
\[
C_{\mathrm{sum},0}
+
\sum_{i=1}^{N}D_i
=
-
\sum_{i=1}^{N}\sum_{j=1}^{N}v_i s_j a
+
\sum_{i=1}^{N}\sum_{j=1}^{N}v_i e_j
\]
\[
+
\sum_{i=1}^{N}m_i
+
\sum_{i=1}^{N}e_i^{(0)}
+
\sum_{i=1}^{N}\sum_{j=1}^{N}s_i(v_j a+e_j^{(1)})
+
\sum_{i=1}^{N}e_i^{*}.
\]

\subsection{Cancellation of Key-Dependent Terms}

Rearranging:
\[
=
-
\sum_{i=1}^{N}\sum_{j=1}^{N}v_i s_j a
+
\sum_{i=1}^{N}\sum_{j=1}^{N}s_i v_j a
\]
\[
+
\sum_{i=1}^{N}m_i
+
\sum_{i=1}^{N}\sum_{j=1}^{N}v_i e_j
+
\sum_{i=1}^{N}e_i^{(0)}
\]
\[
+
\sum_{i=1}^{N}\sum_{j=1}^{N}s_i e_j^{(1)}
+
\sum_{i=1}^{N}e_i^{*}.
\]

The large key-dependent masking terms cancel algebraically:
\[
-
\sum_{i=1}^{N}\sum_{j=1}^{N}v_i s_j a
+
\sum_{i=1}^{N}\sum_{j=1}^{N}s_i v_j a
=
0.
\]

\subsection{Bounded Noise Analysis}

The remaining expression becomes:
\[
=
\sum_{i=1}^{N}m_i
+
\underbrace{
\sum_{i=1}^{N}\sum_{j=1}^{N}v_i e_j
+
\sum_{i=1}^{N}e_i^{(0)}
+
\sum_{i=1}^{N}\sum_{j=1}^{N}s_i e_j^{(1)}
+
\sum_{i=1}^{N}e_i^{*}
}_{\text{bounded RLWE noise}}.
\]

All remaining noise terms are bounded because:
\begin{itemize}
\item the secret keys $s_i$ are sampled from small distributions,
\item the masking polynomials $v_i$ are bounded,
\item the error terms $e_i$, $e_i^{(0)}$, $e_i^{(1)}$, and $e_i^{*}$ follow bounded discrete Gaussian distributions.
\end{itemize}

Therefore, as long as the accumulated noise magnitude remains below the CKKS decoding threshold, correct decryption is preserved.

\subsection{Correctness Result}

The final reconstructed value satisfies:
\[
C_{\mathrm{sum},0}
+
\sum_{i=1}^{N}D_i
\mod q
=
\sum_{i=1}^{N}m_i
+
\text{bounded noise}.
\]

Hence:
\[
C_{\mathrm{sum},0}
+
\sum_{i=1}^{N}D_i
\approx
\sum_{i=1}^{N}m_i.
\]

This establishes the correctness of collaborative decryption in xMK-CKKS.

\section{Noise Budget}
\label{app:noise}


Let $p, r \in \Rq$ be independent zero-mean polynomials with i.i.d.\ coefficients of variances $\sigma_p^2$, $\sigma_r^2$. Then

\begin{equation}
\label{eq:rpr}
\E\|pr\|^2 = n^2\sigma_p^2\sigma_r^2.
\end{equation}

We refer to Eq~\eqref{eq:rpr} as the ring product rule. 
\subsection{First Moment: \texorpdfstring{$\E[\hat{\varepsilon}_k] = 0$}{E[epsilon] = 0}}

The seven terms of $\hat{\varepsilon}_k$ from the recovery Eq~\eqref{eq:Mhat_expand} are
\begin{align}
\hat{\varepsilon}_k &= \underbrace{V_k\tilde{E}_{\mathrm{agg},k}}_{(A)} + \underbrace{\textstyle\sum_i h_{i,k}e_{0,i,k}}_{(B1)} + \underbrace{w_{0,k}}_{(B2)} + \underbrace{\tilde{S}_k\textstyle\sum_i h_{i,k}e_{1,i,k}}_{(C1)} \notag\\
&\quad + \underbrace{\tilde{S}_k w_{1,k}}_{(C2)} + \underbrace{\textstyle\sum_j h_{j,k}e_{j,k}^*}_{(E)} + \underbrace{w_{D,k}}_{(D)},
\label{eq:seven}
\end{align}
where $V_k = \sum_i h_{i,k}v_{i,k}$, $\tilde{S}_k = \sum_j h_{j,k}s_j$, $\tilde{E}_{\mathrm{agg},k} = \sum_j h_{j,k}e_j + w_{b,k}$.

Conditioning on $h_{\cdot,k}$ and using Assumption~1, each term has zero conditional mean:
\begin{align*}
\E[(A)\mid h_{\cdot,k}] &= \Big(\textstyle\sum_i h_{i,k}\E[v_{i,k}]\Big)\,\E[\tilde{E}_{\mathrm{agg},k}] = 0, \\
\E[(B1)\mid h_{\cdot,k}] &= \textstyle\sum_i h_{i,k}\,\E[e_{0,i,k}] = 0, \quad \E[(B2)] = 0,\\
\E[(C1)\mid h_{\cdot,k}] &= \tilde{S}_k\textstyle\sum_i h_{i,k}\,\E[e_{1,i,k}] = 0, \\
\E[(C2)\mid h_{\cdot,k}] &= \tilde{S}_k\E[w_{1,k}] = 0, \quad \E[(D)] = \E[w_{D,k}] = 0\\
\E[(E)\mid h_{\cdot,k}] &= \textstyle\sum_i h_{i,k}\,\E[e_{i,k}^*] = 0.
\end{align*}
Averaging over $h_{\cdot,k}$, we get $\E[\hat{\varepsilon}_k]=0$. As $\hat{\varepsilon}_k$ is independent of the model history, the same holds conditionally, thus yielding Proposition~\ref{prop:prop1}.

\subsection{Second Moment: Term-by-Term}
Each of the seven terms in Eq~\eqref{eq:seven} is a product of independent, zero-mean polynomials, so its second moment is computed in two stages: first, conditioning on $h_{\cdot,k}$, the ring product rule Eq~\eqref{eq:rpr} is applied to the HE and channel noises; then the expectation over $h_{\cdot,k}$ is applied. For a Gaussian channel, $\E[h_{i,k}^4]=3\Omega_i^2-2\mu_i^4$, and by independence across clients
\begin{equation}\label{eq:ch5}
\E\!\left[\!\Big(\textstyle\sum_i h_{i,k}^2\Big)^{\!2}\right] = \Big(\textstyle\sum_i\Omega_i\Big)^2 + 2\textstyle\sum_i\Omega_i^2 - 2\textstyle\sum_i\mu_i^4.
\end{equation}

Denote $S_\Omega:=\sum_i\Omega_i$, $S_{\Omega^2}:=\sum_i\Omega_i^2$, $S_{\mu^4}:=\sum_i\mu_i^4$, so $\E[(\sum_i h_{i,k}^2)^2]=S_\Omega^2+2S_{\Omega^2}-2S_{\mu^4}$ and $\E[\sum_i h_{i,k}^2]=S_\Omega$. The seven terms then evaluate to
\begin{align}
\E\|(A)\|^2 &= \tfrac{2}{3}n^2\!\left[\sigma_e^2\bigl(S_\Omega^2+2S_{\Omega^2}-2S_{\mu^4}\bigr)+\sigma_w^2 S_\Omega\right], \notag\\
\E\|(B1)\|^2 &= n\sigma_e^2 S_\Omega, \qquad \E\|(B2)\|^2 = n\sigma_w^2, \nonumber\\
\E\|(C1)\|^2 &= n^2\sigma_e^2\bigl(S_\Omega^2+2S_{\Omega^2}-2S_{\mu^4}\bigr), \notag\\
\E\|(C2)\|^2 &= n^2\sigma_w^2 S_\Omega, \qquad \E\|(D)\|^2 = n\sigma_w^2, \nonumber\\
\E\|(E)\|^2 &= n\sigma_\phi^2 S_\Omega. \label{eq:C1_final}
\end{align}
\subsection{Summing and Decoding}
Using the triangular inequality $\E\|\sum_{\ell=1}^7 X_\ell\|^2 \leq \sum_{\ell=1}^7 \E\|X_\ell\|^2$ and summing the identities in Eq~\eqref{eq:C1_final} gives
\begin{align}
B_\varepsilon^2 &= \tfrac{2}{3}n^2\bigl(\sigma_e^2(S_\Omega^2+2S_{\Omega^2}-2S_{\mu^4})+\sigma_w^2 S_\Omega\bigr) \notag\\
&\;+ n\sigma_e^2 S_\Omega + n\sigma_w^2 + n^2\sigma_e^2(S_\Omega^2+2S_{\Omega^2}-2S_{\mu^4}) \notag\\
&\;+ n^2\sigma_w^2 S_\Omega + n\sigma_\phi^2 S_\Omega + n\sigma_w^2. \label{eq:Beps}
\end{align}
The per-coefficient decoded variance is $\bar{\sigma}_\varepsilon^2 := B_\varepsilon^2/(n\Lambda^2)$. In the homogeneous case $\mu_i=\mu$, $\Omega_i=\Omega$ ($S_\Omega=N\Omega$, $S_{\Omega^2}=N\Omega^2$, $S_{\mu^4}=N\mu^4$), this gives $\bar{\sigma}_\varepsilon^2 \approx 6\times10^{-17}\,\Omega^2$ and $\bar{\sigma}_\varepsilon^2 \approx 3\times10^{-17}\,\Omega^2$ for the parameters of \ref{tab:he_params}.

For decryption to be correct, the CKKS decryption mechanism requires $\|\hat{\varepsilon}_k\|_\infty < q/2$; otherwise, the modular reduction wraps around and the message cannot be recovered \cite{mkckks}. By $\|\hat{\varepsilon}_k\|_\infty \leq \|\hat{\varepsilon}_k\|$ and Markov's inequality on $\|\hat{\varepsilon}_k\|^2$:
$\Prob(\|\hat{\varepsilon}_k\|_\infty \geq q/2) \leq 4B_\varepsilon^2/q^2.$
With $B_\varepsilon^2$ from Eq~\eqref{eq:Beps}, which is independent of $q$, and $q \approx 2^{109}, 2^{218}$, this probability is negligible.

\section{Proof of Lemma~\ref{lem:lem1}}\label{app:lem1}
\label{app:bias}

The constants $L$, $b$, $b_1$, $b_2$ are given by Assumptions~3.1--3.4 of~\cite{ezofl}: $L$ is the Lipschitz constant of the gradient $\nabla F_i$, i.e.\ $\|\nabla F_i({\bm x})-\nabla F_i({\bm y})\|\leq L\|{\bm x}-{\bm y}\|$; $b$ is a uniform bound on the Hessian  ($\|\nabla^2 F_i\|_2 \leq b$); $b_1 = \E[({ \Phi}_k^j)^2]$, which is a single constant because ${\bm \Phi}_k$ has i.i.d.\ entries drawn from the same fixed distribution in every round, so all coordinates $j$ share the same second moment and it does not vary with $k$; and $b_2$ is a uniform upperbound on $\|{\bm \Phi}_k\| \leq b_2$. The derived constants are $c_1 = 2b_1$ and $c_3 = bb_2^3 N/(2b_1)$. 
Recall that the history of models and samples up to and including ${\bm \theta}_k$ is denoted by
\begin{equation}\label{eq:Hk_def}
\calH_k := \{{\bm \theta}_0, \xi_0, \ldots, {\bm \theta}_{k-1}, \xi_{k-1}, {\bm \theta}_k\},
\end{equation}
where $\xi_k := \{\xi_{i,k}\}_{i=1}^N$ collects the round-$k$ samples across devices.
We note that by Assumption~3.4 of~\cite{ezofl} and Assumption~1, the variables $h_{\cdot,k}$, ${\bm \Phi}_k$, $\xi_{\cdot,k}$,  and the HE randomness are independent of $\calH_k$, hence conditioning on $\calH_k$ preserves mutual independence.


We proceed with the proof of Lemma~\ref{lem:lem1}. The gradient estimator is given by 
\begin{equation}\label{eq:gk_expand}
{\bm g}_k = {\bm \Phi}_k\left(\sum_i \frac{h_{i,k}}{\mu_i}\Delta f_{i,k} + \varepsilon_k\right).
\end{equation}

\paragraph*{Bias}
Since $h_{i,k}$ is independent of $({\bm \Phi}_k, \xi_{\cdot,k})$ and $\varepsilon_k$ is zero-mean and independent of $({\bm \Phi}_k, \xi_{\cdot,k}, h_{\cdot,k})$ (Assumption~1), and since the independence structure is unchanged given $\calH_k$, as noted after Eq~\eqref{eq:Hk_def}, taking $\E[\cdot|\calH_k]$ gives
\begin{align}
\E[{\bm g}_k|\calH_k] &= \sum_{i=1}^N \frac{\E[h_{i,k}]}{\mu_i}\,\E_{\Phi,\xi}[{\bm \Phi}_k\Delta f_{i,k}|\calH_k] \notag\\
&\quad + \underbrace{\E[\varepsilon_k]}_{=\,0}\,\E[{\bm \Phi}_k|\calH_k] \notag\\
&= \sum_{i=1}^N \E_{\Phi,\xi}[{\bm \Phi}_k\Delta f_{i,k}|\calH_k], \label{eq:bias_step1}
\end{align}

where the last equality uses $\E[h_{i,k}] = \mu_i$ from~\eqref{eq:channel_stats}.

We follow the same two-point ZO argument as in~\cite[Appendix~A-A]{ezofl} (a mean-value Taylor expansion of $F_i({\bm \theta}_k \pm \gamma_k{\bm \Phi}_k)$, we obtain  

\begin{equation}\label{eq:first_moment_bound}
\|\E[{\bm g}_k|\calH_k]\| \leq c_1\gamma_k(\|{\nabla} F({\bm \theta}_k)\| + c_3\gamma_k),
\end{equation}
which completes the proof of the first part of Lemma~\ref{lem:lem1}. 

We proceed with the proof of the second moment result. Since $\|{\bm \Phi}_k\| \leq b_2$, Eq~\eqref{eq:gk_expand} implies
\begin{equation}\label{eq:sm_start}
\E[\|{\bm g}_k\|^2|\calH_k] \leq b_2^2\,\E\!\left[\left(\sum_i \frac{h_{i,k}}{\mu_i}\Delta f_{i,k} + \varepsilon_k\right)^{\!2}\Big|\calH_k\right].
\end{equation}

Expanding the square and noticing that $\E[\varepsilon_k] = 0$ and that $\varepsilon_k$ is independent of $(h_{\cdot,k}, {\bm \Phi}_k, \xi_{\cdot,k})$ by Assumption~1, the cross term vanishes and we get 
\begin{align}
&\E\!\left[\left(\sum_i \frac{h_{i,k}}{\mu_i}\Delta f_{i,k} + \varepsilon_k\right)^{\!2}\Big|\calH_k\right] \notag\\
&= \E\!\left[\left(\sum_i \frac{h_{i,k}}{\mu_i}\Delta f_{i,k}\right)^{\!2}\Big|\calH_k\right] + \bar{\sigma}_\varepsilon^2. \label{eq:square_expand}
\end{align}
For the first term, we expand the double sum and use the fact that $h_{i,k}$ is independent of $({\bm \Phi}_k, \xi_{\cdot,k})$, and $h_{i,k}$, $h_{j,k}$ are independent for $i \neq j$. We obtain
\begin{align}
&\E\!\left[\left(\sum_i \frac{h_{i,k}}{\mu_i}\Delta f_{i,k}\right)^{\!2}\Big|\calH_k\right]\nonumber\\
&= \sum_i\sum_j \frac{\E[h_{i,k}h_{j,k}]}{\mu_i\mu_j}\,\E_{\Phi,\xi}[\Delta f_{i,k}\Delta f_{j,k}|\calH_k] \notag\\
&= \sum_i \frac{\Omega_i}{\mu_i^2}\,\E[(\Delta f_{i,k})^2|\calH_k] + \sum_{i \neq j} \frac{\mu_i\mu_j}{\mu_i\mu_j}\,\E[\Delta f_{i,k}\Delta f_{j,k}|\calH_k], \label{eq:main_expand}
\end{align}
where we used $\E[h_{i,k}^2] = \Omega_i$ and $\E[h_{i,k}h_{j,k}] = \mu_i\mu_j$ for $i \neq j$.

For the diagonal terms, by the Lipschitz bound $|\Delta f_{i,k}| \leq 2Lb_2\gamma_k$ (Assumption~3.2):
\begin{equation}\label{eq:lip1}
\E[(\Delta f_{i,k})^2|\calH_k] \leq 4L^2b_2^2\gamma_k^2.
\end{equation}

For the cross terms ($i \neq j$), since $\xi_{i,k}$ and $\xi_{j,k}$ are independent conditioned on ${\bm \Phi}_k$:
\begin{align}
|\E[\Delta f_{i,k}\Delta f_{j,k}|\calH_k]| &\leq \E|[\Delta f_{i,k}||\Delta f_{j,k}||\calH_k] \leq 4L^2b_2^2\gamma_k^2, \label{eq:lip2}
\end{align}
where the first inequality is due to Jensen $|\E[\cdot]|\le\E|\cdot|$, 
and where we used $|\Delta f_{i,k}|\le 2Lb_2\gamma_k$ (Assumption~3.2) to write the second inequality.

Substituting Eq~\eqref{eq:lip1} and Eq~\eqref{eq:lip2} into Eq~\eqref{eq:main_expand} implies
\begin{align}
&\E\!\left[\left(\sum_i \frac{h_{i,k}}{\mu_i}\Delta f_{i,k}\right)^{\!2}\Big|\calH_k\right] \nonumber\\
&\leq \sum_i \frac{\Omega_i}{\mu_i^2}\, 4L^2b_2^2\gamma_k^2 + N(N\!-\!1)\, 4L^2b_2^2\gamma_k^2 \notag\\
&= 4L^2b_2^2\gamma_k^2\Big(\sum_i \frac{\Omega_i}{\mu_i^2} + N(N\!-\!1)\Big). \label{eq:main_bound}
\end{align}

Using Eq.~\eqref{eq:main_bound} 
in Eq.~\eqref{eq:sm_start}, we get
\begin{equation}\label{eq:sm_final}
\E[\|{\bm g}_k\|^2|\calH_k] \leq \tilde{C}'_\gamma\,\gamma_k^2 + \tilde{C}'_\varepsilon,
\end{equation}
where
\begin{align}
\tilde{C}'_\gamma &:= 4L^2b_2^4\Big(\sum_i \frac{\Omega_i}{\mu_i^2} + N(N\!-\!1)\Big), \label{eq:Cgamma}\\
\tilde{C}'_\varepsilon &:= b_2^2\,\bar{\sigma}_\varepsilon^2, \label{eq:Ceps}
\end{align}
where $\bar{\sigma}_\varepsilon^2 := B_\varepsilon^2/(n\Lambda^2)$ is defined in Appendix~\ref{app:noise}-C. In the identical channles case $\mu_i=\mu$, $\Omega_i=\Omega$, this reduces to $\tilde{C}'_\gamma = 4NL^2b_2^4(\Omega+(N-1)\mu^2)/\mu^2$.
\section{Proofs of Theorems~\ref{thm:th1} and~\ref{thm:th2}}
\label{app:rate}

By $L$-smoothness of $F$ (Assumption~3.1 of~\cite{ezofl}):

\begin{equation}\label{eq:smoothness}
F({\bm \theta}_{k+1}) \leq F({\bm \theta}_k) - \eta_k\langle \nabla F({\bm \theta}_k), {\bm g}_k\rangle + \frac{L}{2}\eta_k^2\|{\bm g}_k\|^2.
\end{equation}

Taking $\E[\cdot|\calH_k]$ and applying Lemma~\ref{lem:lem1}, we get
\begin{align}
&\E[F({\bm \theta}_{k+1})|\calH_k] \notag\\
&\leq F({\bm \theta}_k) - c_1\eta_k\gamma_k\langle{\nabla} F({\bm \theta}_k), {\nabla} F({\bm \theta}_k) + {\bm \delta}_k\rangle \notag\\
&\quad + \frac{L}{2}\eta_k^2(\tilde{C}'_\gamma\gamma_k^2 + \tilde{C}'_\varepsilon) \notag\\
&\leq F({\bm \theta}_k) - \frac{c_1\eta_k\gamma_k}{2}\|{\nabla} F({\bm \theta}_k)\|^2 + \frac{c_1c_3^2}{2}\eta_k\gamma_k^3 \notag\\
&\quad + \frac{\tilde{C}'_\gamma L}{2}\eta_k^2\gamma_k^2 + \frac{\tilde{C}'_\varepsilon L}{2}\eta_k^2, \label{eq:descent}
\end{align}
where we used $-\langle a,b\rangle \leq \frac{1}{2}\|a\|^2 + \frac{1}{2}\|b\|^2$ and $\|{\bm \delta}_k\| \leq c_3\gamma_k$ in order to write Eq~\eqref{eq:descent}.

Summing from $k = 0$ to $K$, taking full expectation over all possible history $\calH_K$ and using the tower property as done in~\cite{ezofl}, we obtain
\begin{align}
&\frac{c_1}{2}\sum_{k=0}^K\eta_k\gamma_k\E[\|{\nabla} F({\bm \theta}_k)\|^2] \leq \hat{\Delta} + \frac{c_1c_3^2}{2}\sum_{k=0}^K\eta_k\gamma_k^3 \notag\\
&\quad + \frac{\tilde{C}'_\gamma L}{2}\sum_{k=0}^K\eta_k^2\gamma_k^2 + \frac{\tilde{C}'_\varepsilon L}{2}\sum_{k=0}^K\eta_k^2, \label{eq:tele}
\end{align}
where $\hat{\Delta} = F({\bm \theta}_0) - F({\bm \theta}^*)$.

\textit{Proof of Theorem~1.} By Assumption~3.3 of \cite{ezofl}, $\sum_k\eta_k\gamma_k^3 < \infty$ and $\sum_k\eta_k^2\gamma_k^2 < \infty$. The HE noise term requires $\sum_k\eta_k^2 < \infty$, which holds for $\eta_k = \eta_0(1+k)^{-\upsilon_1}$ with $\upsilon_1 > 1/2$. The RHS of Eq~\eqref{eq:tele} is then finite. Since $\sum_k\eta_k\gamma_k$ diverges by Assumption~3.3, we can show in a similar way to~\cite[Appendix~A-C]{ezofl} that $\lim_{k\to\infty}\E[\|{\nabla} F({\bm \theta}_k)\|^2] = 0$.

\textit{Proof of Theorem~2.} Set $\eta_k = \eta = \eta_0 K^{-1/4}$, $\gamma_k = \gamma = \gamma_0 K^{-1/4}$. Using Eq~\eqref{eq:tele}
\begin{align}
&K\eta_0\gamma_0 K^{-1/2}\min_{k=1:K}\E[\|{\nabla} F({\bm \theta}_k)\|^2] \notag\\
&\leq \frac{2\hat{\Delta}}{c_1} + c_3^2\eta_0\gamma_0^3 + \frac{\tilde{C}'_\gamma L}{c_1}\eta_0^2\gamma_0^2 + \frac{\tilde{C}'_\varepsilon L}{c_1}\eta_0^2 K^{1/2}. \label{eq:rate_sub}
\end{align}

Dividing by $\eta_0\gamma_0 K^{1/2}$
\begin{align}
&\min_{k=1:K}\E[\|{\nabla} F({\bm \theta}_k)\|^2] \notag\\
&\leq \underbrace{\frac{2\hat{\Delta}}{c_1\eta_0\gamma_0\sqrt{K}} + \frac{c_3^2\gamma_0^2}{\sqrt{K}} + \frac{\tilde{C}'_\gamma L\eta_0\gamma_0}{c_1\sqrt{K}}}_{= O(1/\sqrt{K})} + \underbrace{\frac{\tilde{C}'_\varepsilon L\eta_0}{c_1\gamma_0}}_{\rho}. \label{eq:rate}
\end{align}

By Markov's inequality, for any $\epsilon > \rho$ and $\beta > 0$:
\begin{align}
&\Prob\!\left(\min_{k=1:K}\|{\nabla} F({\bm \theta}_k)\|^2 \geq \epsilon\right) \leq \frac{1}{\epsilon}\left(\frac{R}{\sqrt{K}} + \rho\right), \label{eq:markov}
\end{align}
where $R = \frac{2\hat{\Delta}}{c_1\eta_0\gamma_0} + c_3^2\gamma_0^2 + \frac{\tilde{C}'_\gamma L\eta_0\gamma_0}{c_1}$. Setting the RHS of~\eqref{eq:markov} equal to $\beta$ and solving, we get
\begin{equation*}\label{eq:K_formula}
K = \frac{R^2}{(\epsilon\beta - \rho)^2},
\end{equation*}
which completes the proof of Theorem~\ref{thm:th2}.

\end{document}